\documentclass[12pt,twoside]{article}

\def\myPrimaryAuthor{A.Itkin, P.Carr}
\def\myAuthor{
Andrey Itkin \thanks{Hap Capital LLC \& Department of Mathematics, Rutgers University, New Jersey},
\ Peter Carr \thanks{Bloomberg LP \& New York University}}
\def\myProjectName{Using pseudo-parabolic and fractional equations for option pricing in jump diffusion models}
\def\myProjectNum{}
\def\myAbsract{In mathematical finance a popular approach for pricing options under some L\'evy model is to consider underlying that follows a Poisson jump diffusion process. As it is well known this results in a partial integro-differential equation (PIDE) that usually does not allow an analytical solution while numerical solution brings some problems. In this paper we elaborate a new approach on how to transform the PIDE to some class of so-called pseudo-parabolic equations which are known in mathematics but are relatively new for mathematical finance. As an example we discuss several jump-diffusion models which L\'evy measure allows such a transformation.}

\usepackage[dvips]{graphicx}
\usepackage{amsmath, amsthm, amssymb, epsfig}
\usepackage{dsfont}
\usepackage[usenames,dvipsnames]{color}
\usepackage[english]{babel}
\usepackage{fancyhdr}
\usepackage{verbatim}
\usepackage{caption}
\usepackage[comma, sort&compress, authoryear, round]{natbib}
\usepackage{float}

\textwidth=190mm \textheight=230mm \topmargin=-20mm
\oddsidemargin=-0.5in \evensidemargin=-0.5in

\def\ni{\noindent}

\def\bi{\begin{itemize}}     \def\ei{\end{itemize}}
\def\ben{\begin{enumerate}}   \def\een{\end{enumerate}}
\def\beq{\begin{equation}}   \def\eeq{\end{equation}}
\def\bd{\begin{displaymath}} \def\ed{\end{displaymath}}
\def\bea{\begin{eqnarray}}   \def\beaz{\begin{eqnarray*}}
\def\eea{\end{eqnarray}}     \def\eeaz{\end{eqnarray*}}
\def\beac{\begin{eqnarrayc}} \def\eeac{\end{eqnarrayc}}
\def\nn{\nonumber}           
\def\dsize{\displaystyle}    \def\dfrac#1#2{\frac{\dsize #1}{\dsize #2}}

\newtheorem{theorem}{Theorem}[section]
\newtheorem{proposition}[theorem]{Proposition}

\newcommand{\bc}{\begin{center}}
\newcommand{\ec}{\end{center}}
\newcommand{\be}{\begin{equation}}
\newcommand{\ee}{\end{equation}}
\newcommand{\bq}{\begin{eqnarray}}
\newcommand{\eq}{\end{eqnarray}}

\newcommand{\fp}[2]{\frac{\partial #1}{\partial #2}}
\newcommand{\sop}[2]{\frac{\partial^2 #1}{\partial #2^2}}

\newcommand{\cp}[3]{\frac{\partial^2 #1}{\partial #2 \partial #3}}

{\catcode`$=9 $ 
}

\fancyhead{}
\fancyfoot{} 

   {\noindent \footnotesize \verbatim}%
   {\normalsize \endverbatim}

\normalsize

\begin{document}

\pagestyle{plain}
\vspace{-2cm}
\begin{figure}
\begin{minipage}[t]{\textwidth}
\DeclareGraphicsRule{.jpg}{bmp}{.bb}{}
\hspace{-3mm}
\end{minipage}
\end{figure}

\title{\vspace{2cm}\color{blue}\myProjectName \break \break \myProjectNum}
\author{\color{red}\myAuthor}
\date{}
\maketitle

\begin{center}
{\color{magenta} Submitted to Applied Mathematical Finance} \\
\end{center}
\smallskip

\renewcommand\abstractname{Abstract}
\abstract{\myAbsract}

\newpage
\pagestyle{fancyplain}
\fancyfoot[RE,RO]{\thepage}
\fancyfoot[LE,LO]{\color{red}\myPrimaryAuthor\hfill}
\fancyfoot[CE,CO]{\color{red} \footnotesize{\myProjectName \\
\myProjectNum}}

\def\itemDesc#1#2#3{
    \colorbox{#1} {\parbox{0.9\columnwidth}{\color{#2} {\scriptsize #3}}\\}
}
\setcounter{tocdepth}{3}
\tableofcontents
\newpage

\section{Introduction}
In mathematical finance a popular approach for pricing options under some L\'evy model is to consider underlying that follows a Poisson jump diffusion process. As it is well known this results in a partial integro-differential equation (PIDE) that usually does not allow an analytical solution while numerical solution brings some problems.
These problems are mainly related to computing a non-local integral term while we assume that computing a differential part of the PIDE, being discussed numerous times in the literature, could be provided in a relatively standard way. Moreover, using splitting technique it is always possible to reduce the whole PIDE to a series of equations part of which are pure PDE and the remaining part are pure evolutionary-integral equations (EIDE) (see, for instance, \cite{HoutWelfert2009, ItkinCarrBarrierR3}). Thus, further on we will consider just the later.
A thorough description of methods used for solving this kind of equation is given in \cite{ContBook2009, Hilber2009} while problems related to implementation of these methods are discussed in \cite{CarrMayo, Strauss2006}.

According to the last cited paper we could distinguish the following methods that were used to solve the EIDE. In an early paper, Amim (1993) used an explicit multinomial tree based approach. D'Halluin et al. (2004, 2005b) implemented implicit methods for evaluating vanilla European options, barrier options, and American options. They also showed that when a log spaced grid is used with a Crank Nicolson discretization on a problem with constant parameters the resulting scheme is unconditionally strictly stable. In addition, they showed that the simple Picard iteration scheme (also suggested by Tavella and Randall (2000)) for solving the discretized equations is globally convergent. Specifically, they reported that when they priced options in the Merton model the error was reduced by two orders of magnitude at each iteration for typical values of the time step size and Poisson arrival intensity. More recently, d'Halluin et al. (2005a) presented a semi-Lagrangian approach for pricing American Asian options under jump diffusion processes. Andersen and Andreasen (2000) derived a forward equation describing the evolution of European call options as functions of strike and maturity, and discussed its application to the problem of fitting the stock process to option prices in the market. They also presented a second order accurate unconditionally stable operator splitting (ADI) method for pricing options which does not require iterative solution of an algebraic equation at each time step. (Unfortunately, it is not clear how to extend their method to the valuation of American options while retaining second order accuracy.) Cont and Voltchkova (2003) used a discretization that is implicit in the differential terms and implicit in the integral term, and showed that it converges to a viscosity solution. Their method extends to infinite activity models, and does not require the diffusion part of the equation to be non-degenerate. These partial integro-differential equations have also been solved by many others. See, for example, Zhang (1993) and Matache et al. (2002). Although the pricing equations have often been solved numerically, because of the integrals in the equations the methods have proven relatively expensive. The obvious discretizations of the pricing equations combine standard discretization methods for the differential terms with quadrature methods such as Simpson's rule or Gaussian quadrature for evaluating the integral term. This approach is computationally expensive since the integral must be approximated at each point of the mesh used for discretizing the differential terms. The difficulties are greater if an implicit discretization of both the integral and the differential terms is used. The expense of evaluating the integral at all points of the computational grid can, however, be reduced by making the same exponential change of variables often used when solving the Black–Scholes differential equation when there is no jump process. This converts the integral term into a correlation integral which can be evaluated at all the mesh points simultaneously using the Fast Fourier Transform. This approach has been suggested by many authors (Wilmott, 1998; Tavella and Randall, 2000; Andreasen and Anderson, 2000).

For multidimensional L{\'e}vy process various kind of finite elements methods were proposed (see survey in \cite{Hilber2009}) because finite difference methods are not efficient when dimensionality of the problem exceeds 3.

Also note a new method for exponential jumps proposed by \cite{LiptonSepp2009} who calculate the jump integral recursively on the spatial grid. This is a special trick for exponential jumps, it does not work for more familiar Gaussian jumps. The authors claim that for discrete jumps a simple interpolation routine is sufficient.

As Carr and Mayo mentioned (see \cite{CarrMayo}) quadrature methods are expensive since the integrals must be evaluated at every point of the mesh. Though less so, Fourier methods are also computationally intensive since in order to avoid wrap around effects they require enlargement of the computational domain. They are also slow to converge when the parameters of the jump process are not smooth, and for efficiency require uniform meshes. Therefore, they proposed a different and more efficient class of methods which are based on the fact that the integrals often satisfy differential equations. Depending on the process the asset follows, the equations are either ordinary differential equations or parabolic partial differential equations. Both types of equations can be accurately solved very rapidly. They used to demonstrate the advantage of such an approach for the Merton and Kou models. However, for other types of the L\'evy models an extension of their idea is unknown yet.

Therefore in this paper we propose two different approaches. The idea of the first one is to represent a L\'evy measure as the Green's function of some yet unknown differential operator $\mathcal{A}$. If we manage to find an explicit form of such an operator then the original PIDE reduces to a new type of equation - so-called pseudo-parabolic equation. These equations are known in mathematics (see, for instance, \cite{CannonLin2007}) but are new for mathematical finance.

Then we rely on two important results, namely: a) the inverse operator $\mathcal{A}^{-1}$ exists, and b) the obtained pseudo parabolic equation could be formally solved analytically via a matrix exponent. Having that we discuss a numerical method of how to compute this matrix exponent. We show that we can do it using a finite difference scheme similar to that used for solving parabolic PDEs and the matrix of this FD scheme is banded. We fulfill this program for general tempered stable processes (GTSP) with an integer damping exponent $\alpha$.

Alternatively for some class of L\'evy processes, known as GTSP/KoBoL/SSM models, with the real dumping exponent $\alpha$ we show how to transform the corresponding PIDE to a fractional PDE (method 2). Fractional PDEs for the L\'evy processes with finite variation were derived by \cite{BL2002}  and later by \cite{Cartea2007}. using a characteristic function technique. Numerical solution of these equations was investigated by \cite{Cartea2007} and \cite{Marom2009}. In this paper we derive them in all cases including processes with infinite variation using a different technique - shift operators. Then to solve them we apply a new method, namely: having results computed for $\alpha \in \mathbb{I}$ we then interpolate them with the second order in $\alpha$ to obtain the solution at any $\alpha \in \mathbb{R}$.

We also show that despite it is a common practice to integrate out all L\'evy compensators in the integral term when one considers jumps with finite activity and finite variation, this breaks the stability of the scheme, at least for the fractional PDE. Therefore, in order to construct the unconditionally stable scheme one must keep the other terms under the integrals. To resolve this in Cartea (2007) the authors were compelled to change their definition of the fractional derivative.

We also propose the idea of solving FPDE with real $\alpha$ by using interpolation between option prices computed for the closest integer values of $\alpha$. For the latter an efficient scheme is proposed that results in LU factorization of the band matrix.

It is important to note that both proposed methods could be easily generalized for a time-dependent L\'evy density.

The rest of the paper is organized as follows. In section \ref{Sbasic} we discuss a basic example of the method which is built based on a simple exponential L\'evy measure. In the next section we consider GTSP models and show how to reduce the corresponding PIDE to a pseudo parabolic equation in this case. Section \ref{nm1} describes numerical solution of the obtained pseudo parabolic equations in case $\alpha \in \mathbb{I}$. Section \ref{Sgc} describes a general case of real $\alpha$ and introduces our method of deriving fractional PDE based on shift operators. In section \ref{Snum} we discuss how to solve these FPDE by constructing unconditionally stable finite difference schemes of high order of accuracy in space and time and provide some numerical examples and comparison with the other methods. The last section concludes.

\section{Basic model} \label{Sbasic}
In this section we consider the simplest possible problem to demonstrate basics of our new method. We assume no arbitrage so that there exists a risk-neutral measure $\mathbb{Q}$. We assume zero interest rates and dividends so that the stock price is a $\mathbb{Q}$-martingale. Suppose that the underlying stock price process is pure jump (i.e. there is no continuous martingale component). Further suppose that the jump process is a compound Poisson process. The arrival rate of a jump is constant at $\lambda >0$, while the jump size distribution is a symmetric Laplace distribution, i.e the probability density for a jump of size $j \in \mathbb{R}$, given that a jump has occurred is given by:
\be
q(j) = \frac{\alpha e^{- \alpha |j|}}{2}, \qquad j \in \mathbb{R},
\label{jpdf}
\ee
where $\alpha > 0$ is a free parameter.
We recognize that these dynamics let prices become negative and ignore this complication.
Let $u(x,t)$ be the value of the contingent claim at calendar time $t \in [0,T]$
given that the time $t$ stock price is $x \in \mathbb{R}$.
As a result of our assumptions,
the contingent claim value solves the following PIDE:
\be
\fp{}{t} u(x,t) + \lambda
\int_{\mathbb{R}} [u(x+j,t) - u(x,t) - \fp{}{x}u(x,t) j] q(j) dj = 0,
\label{pide}
\ee
on the domain $x \in \mathbb{R}, t \in [0,T]$.
For a European call, the terminal condition is
\be
u(x,T) = (x - K)^+, \qquad x \in \mathbb{R},
\label{ec}
\ee
where $K \in \mathbb{R}$ is the strike price.

Now the symmetry of the PDF in (\ref{jpdf}) implies that:
\be
\int_{\mathbb{R}} j q(j) dj = 0,
\label{sym}
\ee
and hence the PIDE (\ref{pide}) simplifies to:
\be
\fp{}{t} u(x,t) - \lambda u(x,t) + \lambda
\int_{\mathbb{R}} u(x+j,t) q(j) dj = 0.
\label{pide1}
\ee
If we do the change of variable $z=- j$ in the integral, we obtain a convolution:
\be
\fp{}{t} u(x,t) - \lambda u(x,t) + \lambda \int_{\mathbb{R}} u(x-z,t) q(z) dz = 0.
\label{pide2}
\ee
If we do the change of variable $y= x-z$ in the integral, we obtain:
\be
\fp{}{t} u(x,t) - \lambda u(x,t) + \lambda \int_{\mathbb{R}} u(y,t) q(x-y) dy = 0.
\label{pide3}
\ee

Now consider the simple second order linear inhomogeneous ODE:
\be
g''(x) - \alpha^2 g(x) = - \delta(x), \qquad x \in \mathbb{R},
\label{ode}
\ee
where $\delta(x)$ denotes Dirac's delta function.
Suppose that the ODE is to be solved subject to the boundary conditions:
\be
\lim\limits_{x \rightarrow \pm \infty} g(x) = 0.
\label{bc}
\ee
The solution to this problem is usually referred to as a
Green's function. The solution is well known to be:
\be
g(x) = \frac{e^{- \alpha |x|}}{2 \alpha}.
\label{odesoln}
\ee
Comparing (\ref{odesoln}) and (\ref{jpdf}), we see that:
\be
q(x) = \alpha^2 g(x).
\label{q=g}
\ee
Hence, the PIDE (\ref{pide5}) can be re-written as:
\be
\fp{}{t} u(x,t) - \lambda u(x,t) + \lambda \alpha^2
\int_{\mathbb{R}} u(y,t) g(x-y) dy = 0.
\label{pide6}
\ee

To exploit the connection (\ref{q=g}), let ${\cal D}_x$ denote the first derivative operator
and let ${\cal A}_x$ denote the following linear differential operator:
\be
{\cal A}_x \equiv {\cal D}_x^2  - \alpha^2 {\cal I}_x,
\label{op}
\ee
where ${\cal I}_x$ is the identity operator.

Using this operator notation, the ODE (\ref{ode}) reads:
\be
{\cal A}_x  g(x) = - \delta(x).
\label{ode1}
\ee
Suppose that we apply the ${\cal A}_x$ operator to (\ref{pide6}):
\be
{\cal A}_x \fp{}{t} u(x,t) - \lambda {\cal A}_x u(x,t) + \lambda \alpha^2
\int_{\mathbb{R}} u(y,t) {\cal A}_x g(x-y) dy = 0.
\label{pide4}
\ee
where we have assumed that the interchange of the integral and the
differential operator is permissible.
Substituting (\ref{ode1}) in (\ref{pide6}) implies that:
\be
{\cal A}_x \fp{}{t} u(x,t) - \lambda {\cal A}_x u(x,t) - \lambda \alpha^2
\int_{\mathbb{R}} u(y,t) \delta(x-y) dy = 0.
\label{pide5}
\ee
Using the sifting property of the delta function implies
that our problem reduces to a (third order) PDE:
\be
{\cal A}_x \fp{}{t} u(x,t) - \lambda {\cal A}_x u(x,t) - \lambda \alpha^2
 u(x,t)  = 0.
\label{pide7}
\ee
Substituting (\ref{op}) in  (\ref{pide7}) and simplifying implies:
\be
\frac{\partial^3 }{\partial x^2 \partial t} u(x,t) - \alpha^2 \fp{}{t} u(x,t)
- \lambda \sop{}{x} u(x,t) = 0.
\label{simp}
\ee

Note that the generalization from exponential type kernels to Erlang type kernels can be handled by replacing the
second order differential operator ${\cal A}_x$ by a higher order differential operator.
We further note that the Central Limit Theorem implies that the limiting sum of these independent exponential random variables is normally distributed. A Gaussian component to the jump kernel induces an infinite order ODE
which is equivalent to a PDE. Hence the Gaussian type jump of Merton can be handled by solving a PDE as we already know. The PDF of a linear combination of independent exponential and Gaussian random variables is called the Polya Laguerre distribution. A good reference for the above inversion is Hirschman and Widder.

\section{GTSP/KoBoL/SSM model}
Stochastic skew model (SSM) has been proposed by \cite{CarrWu2004} for pricing currency options. It makes use of a L\'evy model also known as generalized tempered stable processes (GTSP) (see \cite{ContTankov}) for the dynamics of stock prices which generalize the CGMY processes proposed by \cite{CarrGemanMadanYor:2002}. A similar model was independently proposed by \cite{Koponen} and then \cite{BL2002} The processes are obtained by specifying a more generalized L\'evy measure with two additional parameters. These two parameters provide control on asymmetry of small jumps and different frequencies for upward and downward jumps. The results of \cite{Hagan2006} show that this generalization allows for more accurate pricing of options.

Generalized Tempered Stable Processes (GTSP) have probability densities symmetric in a neighborhood of the origin and exponentially decaying in the far tails. After this exponential softening, the small jumps keep their initial stable-like behavior, whereas the large jumps become exponentially tempered. The L\'evy measure of GTSP reads
\begin{equation}\label{measure}
    \mu(y) = \lambda_{-} \dfrac{e^{-\nu_{-}|y|}}{|y|^{1 + \alpha_{-}}}\mathbf{1}_{y<0} + \lambda_{+} \dfrac{e^{-\nu_{+}|y|}}{|y|^{1 + \alpha_{+}}}\mathbf{1}_{y>0},
\end{equation}

\ni where $\nu_\pm > 0, \lambda_\pm > 0$ and $\alpha_\pm < 2$. The last condition is necessary to provide
\begin{equation}\label{cond}
    \int^1_{-1} y^2 \mu(dy) < \infty, \ \int_{|y| > 1} \mu(dy) < \infty.
\end{equation}

The case $\lambda_{+} = \lambda_{-}, \alpha_{+} = \alpha_{-}$ corresponds to the CGMY process. The limiting case $\alpha_{+} = \alpha_{-} = 0, \lambda_{+} = \lambda_{-}$ is the special case of the Variance Gamma process of \cite{MadanSeneta:90}. As Hagan at al mentioned (see \cite{Hagan2006}) six parameters of the model play an important role in capturing various aspects of the stochastic process. The parameters $\lambda_\pm$ determine the overall and relative frequencies of upward and downward jumps. If we are interested only in jumps larger than a given value, these
two parameters tell us how often we should expect such events. $\nu_\pm$ control the tail behavior of the L\'evy measure, and they tell us how far the process may jump. They also lead to skewed distributions when they are unequal. In the special case when they are equal, the L\'evy measure is symmetric. Finally, $\alpha_\pm$ are particularly useful for the local behavior of the process. They determine whether the process has finite or infinite activity, or variation.

Using this model of jumps \cite{CarrWu2004} derived the following PIDE which governs an arbitrage-free value of a European call option at time $t$
\begin{align} \label{pide_init}
r_d C(&S,V_R,V_L,t)  =  \fp{}{t} C(S,V_R,V_L,t) + (r_d - r_f) S \fp{}{S} C(S,V_R,V_L,t)  \\
&+  \kappa(1 - V_R) \fp{}{V_R} C(S,V_R,V_L,t) + \kappa(1 - V_L) \fp{}{V_L} C(S,V_R,V_L,t) \nonumber \\
&+ \frac{\sigma^2 S^2 (V_R+V_L)}{2} \sop{}{S} C(S,V_R,V_L,t) + \sigma \rho^R \sigma_V S V_R \cp{}{S}{V_R} C(S,V_R,V_L,t) \nn \\
&+ \sigma \rho^L \sigma_V S V_L \cp{}{S}{V_L} C(S,V_R,V_L,t) + \frac{\sigma_V^2 V_R}{2} \sop{}{V_R} C(S,V_R,V_L,t)
+ \frac{\sigma_V^2 V_L}{2} \sop{}{V_L} C(S,V_R,V_L,t) \nonumber \\
&+  \sqrt{V_R} \int_0^{\infty} \left[C(Se^y,V_R,V_L,t) - C(S,V_R,V_L,t) - \fp{}{S} C(S,V_R,V_L,t) S(e^y-1)  \right]
\lambda \frac{e^{-\nu_R |y| }}{|y|^{1+\alpha}} dy \nonumber \\
&+ \sqrt{V_L} \int^0_{-\infty} \left[C(Se^y,V_R,V_L,t) - C(S,V_R,V_L,t) - \fp{}{S} C(S,V_R,V_L,t) S(e^y-1)  \right]
\lambda \frac{e^{-\nu_L |y| }}{|y|^{1+\alpha}} dy, \nonumber
\end{align}

\ni on the domain $S>0,V_R>0,V_L>0$ and $t \in [0,T]$, where $S, V_R, V_L$ are state variables (spot price and stochastic variances). For the following we make some critical assumptions.

\begin{enumerate}
\item This PIDE could be generalized with allowance for GTSP processes, which means we substitute $\alpha$ in Eq.~(\ref{pide_init}) with $\alpha_R, \alpha_L$, and $\lambda$ with $\lambda_R, \lambda_L$ correspondingly.
\item The obtained PIDE could be solved by using a splitting technique similar to that proposed in \cite{ItkinCarrBarrierR3}.
\item We assume $\alpha_R < 0, \alpha_L < 0$ which means we consider only jumps with finite activity. Therefore, each compensator under the integral could be integrated out.
\end{enumerate}

As a result we consider just that steps of splitting which deals with the remaining integral term. The corresponding equation reads

\begin{equation} \label{int1}
\fp{}{t}C(S,V_R,V_L,t) = -\sqrt{V_R} \int_0^{\infty} C(Se^y,V_R,V_L,t) \lambda_R \dfrac{e^{-\nu_R |y|}}{|y|^{1+\alpha_R}} dy
\end{equation}

\ni for positive jumps and
\begin{equation} \label{int2}
\fp{}{t}C(S,V_R,V_L,t) = -\sqrt{V_L} \int_{-\infty}^0 C(Se^y,V_R,V_L,t) \lambda_L \dfrac{e^{-\nu_L |y|}}{|y|^{1+\alpha_L}} dy
\end{equation}

\ni for negative jumps.

Making a change of variables $x = \log S$ and omitting dependence on dummy variables $V_R, V_L$ we can rewrite these two equations in a more standard form
\begin{eqnarray} \label{intX}
\fp{}{t}C(x,t) &=& -\sqrt{V_R} \int^{\infty}_0 C(x+y,t) \lambda_R \dfrac{e^{-\nu_R |y|}}{|y|^{1+\alpha_R}} dy \\
\fp{}{t}C(x,t) &=& - \sqrt{V_L} \int_{-\infty}^0 C(x+y,t) \lambda_L \dfrac{e^{-\nu_L |y|}}{|y|^{1+\alpha_L}} dy \nn
\end{eqnarray}

To make it clear the above is not a system of equations but rather two different steps of the splitting procedure.

Now an important note is that in accordance with the definition of these integrals we can rewrite the kernel as
\begin{eqnarray} \label{intEq}
\fp{}{t}C(x,t) &=& - \sqrt{V_R} \int^{\infty}_0 C(x+y,t) \lambda_R \dfrac{e^{-\nu_R |y|}}{|y|^{1+\alpha_R}}\mathbf{1}_{y>0} dy \\
\fp{}{t}C(S,t) &=& - \sqrt{V_L} \int_{-\infty}^0 C(x+y,t) \lambda_L \dfrac{e^{-\nu_L |y|}}{|y|^{1+\alpha_L}}\mathbf{1}_{y<0} dy \nn
\end{eqnarray}

This two equations are still PIDE or evolutionary integral equations. We want to apply our new method to transform them to a certain pseudo parabolic equations.

\paragraph{First equation in the Eq.~(\ref{intEq})} Assuming $z=x+y$ we rewrite it in the form
\begin{equation} \label{intEq1}
\fp{}{t}C(x,t) = - \sqrt{V_R} \int^{\infty}_x C(z,t) \lambda_R \dfrac{e^{-\nu_R |z-x|}}{|z-x|^{1+\alpha_R}}\mathbf{1}_{z-x>0} dz
\end{equation}

To achieve our goal we have to solve the following problem. We need to find a differential operator $\mathcal{A}^+_y$ which Green's function is the kernel of the integral in the Eq.~(\ref{intEq1}), i.e.
\begin{equation}\label{greenFdef}
    \mathcal{A}^+_y\left[\lambda \dfrac{e^{-\nu |y|}}{|y|^{1+\alpha}}\mathbf{1}_{y>0}\right] = \delta(y)
\end{equation}

We prove the following proposition.
\begin{proposition} \label{p1}
Assume that in the Eq.~(\ref{greenFdef}) $\alpha \in \mathbb{I}$, and $\alpha < 0$. Then the solution of the Eq.~(\ref{greenFdef}) with respect to $\mathcal{A}^+_y$ is
\bd
\mathcal{A}^+_y = \dfrac{1}{\lambda p!}\left(\nu + \fp{}{y}\right)^{p+1} \equiv \dfrac{1}{\lambda p!}\left[
\sum_{i=0}^{p+1} C^{p+1}_{i} \nu^{p+1-i} \dfrac{\partial^i}{\partial y^i}\right], \quad p \equiv -(1 + \alpha) \ge 0,
\ed

\ni where $C^{p+1}_{i}$ are the binomial coefficients.
\end{proposition}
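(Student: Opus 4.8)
The plan is to reduce the defining relation \eqref{greenFdef} to a classical distributional identity for the Heaviside step function, exploiting the fact that for an integer $\alpha<0$ the kernel is a genuine polynomial times a decaying exponential, with no singularity at the origin (this is precisely the Erlang-type generalization anticipated in the remark at the end of Section~\ref{Sbasic}). First I would restrict to $y>0$, where $|y|=y$, and use $-(1+\alpha)=p$ with $p\ge 0$ an integer to rewrite the kernel as
\[
K(y) \;\equiv\; \lambda \dfrac{e^{-\nu|y|}}{|y|^{1+\alpha}}\mathbf{1}_{y>0} \;=\; \lambda\, y^{p} e^{-\nu y}\,\mathbf{1}_{y>0}.
\]
The indicator lets me treat $K$ as a distribution on all of $\mathbb{R}$, namely $\lambda\, y^p e^{-\nu y} H(y)$ with $H$ the Heaviside function, which is exactly the object on which $\mathcal{A}^+_y$ must act to return $\delta(y)$.

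The key algebraic observation is the conjugation identity for the shifted derivative,
\[
\nu + \fp{}{y} \;=\; e^{-\nu y}\,\fp{}{y}\,e^{\nu y},
\]
which I would verify by letting both sides act on a test function. Iterating and cancelling the intermediate $e^{\nu y}e^{-\nu y}$ factors gives $\left(\nu+\fp{}{y}\right)^{p+1} = e^{-\nu y}\,\dfrac{\partial^{p+1}}{\partial y^{p+1}}\,e^{\nu y}$. Applying this to $K$, the conjugating factor $e^{\nu y}$ immediately kills the exponential in the kernel, leaving
\[
\left(\nu+\fp{}{y}\right)^{p+1}\! K(y) \;=\; \lambda\, e^{-\nu y}\,\dfrac{\partial^{p+1}}{\partial y^{p+1}}\big[y^{p} H(y)\big].
\]

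It then remains to evaluate the distributional derivative $\dfrac{\partial^{p+1}}{\partial y^{p+1}}\big[y^p H(y)\big]$, and I would prove by induction on $p$ that it equals $p!\,\delta(y)$: the base case $p=0$ is $H'=\delta$, while the inductive step uses $\frac{d}{dy}\big[y^p H\big] = p\,y^{p-1}H + y^p\delta = p\,y^{p-1}H$, the boundary term vanishing because $y^p\delta(y)=0$ for $p\ge 1$, so that the full $(p+1)$-th derivative reduces to $p$ times the $p$-th derivative of $y^{p-1}H$. Substituting this and using $e^{-\nu y}\delta(y)=\delta(y)$ (the delta is supported at the origin) yields $\left(\nu+\fp{}{y}\right)^{p+1}\! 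K = \lambda\, p!\,\delta(y)$, and dividing by $\lambda p!$ gives \eqref{greenFdef}; the binomial form in the statement is then just the ordinary binomial theorem, valid since the scalar $\nu$ and $\fp{}{y}$ commute. The only genuinely delicate points are distributional: one must justify the conjugation identity and the vanishing of $y^p\delta$ as equalities of distributions, and one should stress that the argument hinges on $p$ being a nonnegative integer — i.e. $\alpha$ a negative integer — so that $y^p$ is a polynomial and $K$ is regular at the origin. The singular non-integer case cannot be handled this way, which is precisely what later motivates the fractional-derivative approach.
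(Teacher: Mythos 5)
Your proof is correct, but it takes a genuinely different route from the paper's. The paper disposes of Proposition \ref{p1} by an operational/Laplace-transform argument (spelled out only later, in the first proof of Proposition \ref{p5}): one observes that $\mathcal{L}_s$ turns $\mathcal{A}^+_y$ into multiplication by $\tfrac{1}{\lambda p!}(\nu+s)^{p+1}$ and that $\mathcal{L}_s\{\lambda y^p e^{-\nu y}\mathbf{1}_{y>0}\} = \lambda\,\Gamma(p+1)(\nu+s)^{-(p+1)}$, so the product is $1=\mathcal{L}_s\delta$; in the body of Proposition \ref{p1} itself the authors only cite this forthcoming argument and a Mathematica verification. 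You instead work directly in the space of distributions: the conjugation identity $\nu+\fp{}{y}=e^{-\nu y}\fp{}{y}e^{\nu y}$ reduces everything to $\frac{\partial^{p+1}}{\partial y^{p+1}}\bigl[y^pH(y)\bigr]=p!\,\delta(y)$, which you establish by induction using $y^p\delta(y)=0$ for $p\ge 1$. Both arguments are sound. Your version is more self-contained and makes visible exactly where the delta function is produced (from differentiating the Heaviside cutoff once the polynomial factor has been exhausted), and it makes explicit why integrality of $p$ is essential — a point the paper's sketch leaves implicit. The paper's Laplace-transform route, on the other hand, is the one that generalizes: replacing $p!$ by $\Gamma(p+1)$ and the finite binomial sum by its infinite fractional analogue is exactly how Proposition \ref{p5} extends the result to real $\alpha$, whereas your induction on the polynomial degree has no such extension, as you yourself note. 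One minor caution: the conjugation identity and the iterated cancellation should be justified as identities on distributions supported in $[0,\infty)$ (multiplication by the smooth function $e^{\pm\nu y}$ is harmless there), but that is a routine remark rather than a gap.
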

\begin{proof}[{\bf Proof}]
As it will be shown later this result could be proven by taking Laplace transform of both parts of the Eq.~(\ref{greenFdef}). It could be also verified using Mathematica commands given in Fig.~\ref{math} (we can check the above result for any positive integer $p$).
\begin{figure}[t!]
\begin{center}
\fbox{\includegraphics[totalheight=1.7 in]{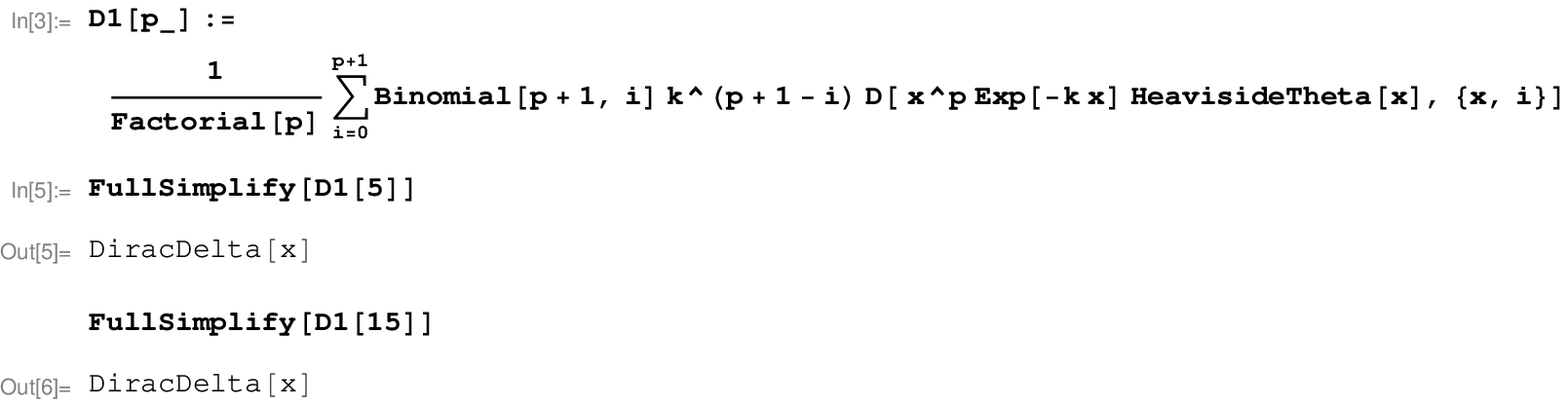}}
\caption{Mathematica commands to verify the proposition \ref{p1}.}
\label{math}
\end{center}
\end{figure}
\end{proof}

\paragraph{Second equation in the Eq.~(\ref{intEq})}
For the second equation in the Eq.~(\ref{intEq}) it is possible to elaborate an analogous approach. Again assuming $z=x+y$ we rewrite it in the form
\begin{equation} \label{intEq2}
\fp{}{t}C(x,t) = - \sqrt{V_L} \int^x_{-\infty} C(z,t) \lambda_R \dfrac{e^{-\nu_R |z-x|}}{|z-x|^{1+\alpha_R}}\mathbf{1}_{z-x<0} dz
\end{equation}

Now we need to find a differential operator $\mathcal{A}^-_y$ which Green's function is the kernel of the integral in the Eq.~(\ref{intEq2}), i.e.
\begin{equation}\label{greenFdef2}
    \mathcal{A}^-_y\left[\lambda \dfrac{e^{-\nu |y|}}{|y|^{1+\alpha}}\mathbf{1}_{y<0}\right] = \delta(y)
\end{equation}

We prove the following proposition.
\begin{proposition} \label{p2}
Assume that in the Eq.~(\ref{greenFdef2}) $\alpha \in \mathbb{I}$, and $\alpha < 0$. Then the solution of the Eq.~(\ref{greenFdef2}) with respect to $\mathcal{A}^-_y$ is
\bd
\mathcal{A}^-_y = \dfrac{1}{\lambda p!}\left(\nu - \fp{}{y}\right)^{p+1} \equiv \dfrac{1}{\lambda p!}\left[
\sum_{i=0}^{p+1} (-1)^i C^{p+1}_{i} \nu^{p+1-i} \dfrac{\partial^i}{\partial y^i}\right], \ p \equiv -(1 + \alpha),
\ed
\end{proposition}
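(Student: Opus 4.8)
The plan is to follow the Laplace-transform route promised for Proposition~\ref{p1}, reflected to the negative half-line, while noting that the cleanest guarantee of correctness is a direct reduction to Proposition~\ref{p1} via the substitution $y\mapsto -y$. Since $\alpha\in\mathbb{I}$ with $\alpha<0$ and $p=-(1+\alpha)\ge 0$, on $y<0$ (where $|y|=-y$ and $|y|^{1+\alpha}=|y|^{-p}$) the kernel becomes
\be
g^-(y)\equiv \lambda\,\dfrac{e^{-\nu|y|}}{|y|^{1+\alpha}}\mathbf{1}_{y<0}=\lambda\, e^{\nu y}(-y)^{p}\,\mathbf{1}_{y<0},
\ee
a degree-$p$ polynomial times a left-supported decaying exponential. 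The goal is to show $\mathcal{A}^-_y g^-=\delta(y)$.

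First I would record the intertwining identity obtained by conjugating with $e^{\nu y}$, namely $(\nu-\fp{}{y})[e^{\nu y}f]=-e^{\nu y}f'$, which iterates to
\be
\left(\nu-\fp{}{y}\right)^{p+1}\!\big[e^{\nu y}f(y)\big]=(-1)^{p+1}e^{\nu y}\,\frac{\partial^{p+1}}{\partial y^{p+1}}f(y).
\ee
Applying this with $f(y)=(-y)^p\mathbf{1}_{y<0}$ reduces the claim to a single distributional computation of $\partial_y^{p+1}$ of a one-sided polynomial. Writing $\mathbf{1}_{y<0}=1-\mathbf{1}_{y>0}$, the smooth polynomial part is annihilated by $p+1$ derivatives, so only the jump at the origin survives and one finds $\partial_y^{p+1}[(-y)^p\mathbf{1}_{y<0}]=(-1)^{p+1}p!\,\delta(y)$. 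Substituting back, using $e^{\nu y}\delta(y)=\delta(y)$, the two factors of $(-1)^{p+1}$ cancel and $(\nu-\fp{}{y})^{p+1}g^-=\lambda\, p!\,\delta(y)$, so dividing by $\lambda p!$ yields the asserted operator; the binomial expansion in the statement is then just $(\nu-\partial_y)^{p+1}=\sum_{i}(-1)^i C^{p+1}_i\,\nu^{p+1-i}\partial_y^i$.

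I expect the only genuine obstacle to be the sign bookkeeping: the intertwining contributes $(-1)^{p+1}$, and differentiating the left-supported indicator gives $\partial_y\mathbf{1}_{y<0}=-\delta(y)$ --- the opposite sign to the $y>0$ case underlying Proposition~\ref{p1} --- contributing a second $(-1)^{p+1}$, and the result depends on these cancelling for every $p$. To make the cancellation automatic rather than verified by hand, I would ultimately prefer the reflection argument: the map $y\mapsto -y$ sends $\partial_y\mapsto-\partial_y$ and fixes $\delta$ (it is even), and it carries the kernel of Proposition~\ref{p1} exactly onto $g^-$; hence it transports the identity $\frac{1}{\lambda p!}(\nu+\partial_y)^{p+1}g^+=\delta$ directly into $\frac{1}{\lambda p!}(\nu-\partial_y)^{p+1}g^-=\delta$, which is precisely the claim. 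This also confirms that the parity subtleties above are genuine symmetries rather than coincidences.
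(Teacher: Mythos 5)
Your argument is correct, but it takes a genuinely different route from the paper, whose ``proof'' of Proposition~\ref{p2} is a single sentence deferring to a Laplace transform (actually carried out only later, in Proposition~\ref{p5}, and only for the kernel supported on $y>0$) and to Mathematica checks for individual integer values of $p$. You instead give a self-contained distributional proof: the conjugation identity $\left(\nu-\fp{}{y}\right)\left[e^{\nu y}f\right]=-e^{\nu y}f'$ iterates to pull the exponential out with a factor $(-1)^{p+1}$, reducing everything to $\partial_y^{p+1}\left[(-y)^p\mathbf{1}_{y<0}\right]=(-1)^{p+1}p!\,\delta(y)$, and the two signs cancel; your bookkeeping checks out, including the base case $p=0$ where $\partial_y\mathbf{1}_{y<0}=-\delta(y)$ matches $(-1)^{1}0!\,\delta(y)$. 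Your reflection argument is cleaner still: $y\mapsto-y$ sends $\fp{}{y}\mapsto-\fp{}{y}$, fixes $\delta$, and carries $g^+$ onto $g^-$, so Proposition~\ref{p2} is literally the image of Proposition~\ref{p1} under this involution --- which is presumably why the authors felt no separate argument was needed, though they never say so. What your version buys is a genuine proof valid for all integers $p\ge 0$ at once (rather than symbolic verification case by case) together with an explicit identification of where the delta function comes from, namely the jump of the one-sided kernel at the origin; what the paper's Laplace-transform route buys is that it extends verbatim to non-integer $p$, which is exactly what is needed later for Proposition~\ref{p5}. The only caveat is that your reflection argument presupposes Proposition~\ref{p1}, but your direct computation does not, so the proof stands on its own either way.
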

\begin{proof}[{\bf Proof}]
Using Laplace transform or Mathematica commands given in Fig.~\ref{math1} we can check the above result for any positive integer $p$.
\begin{figure}[t!]
\begin{center}
\fbox{\includegraphics[totalheight=1.6 in]{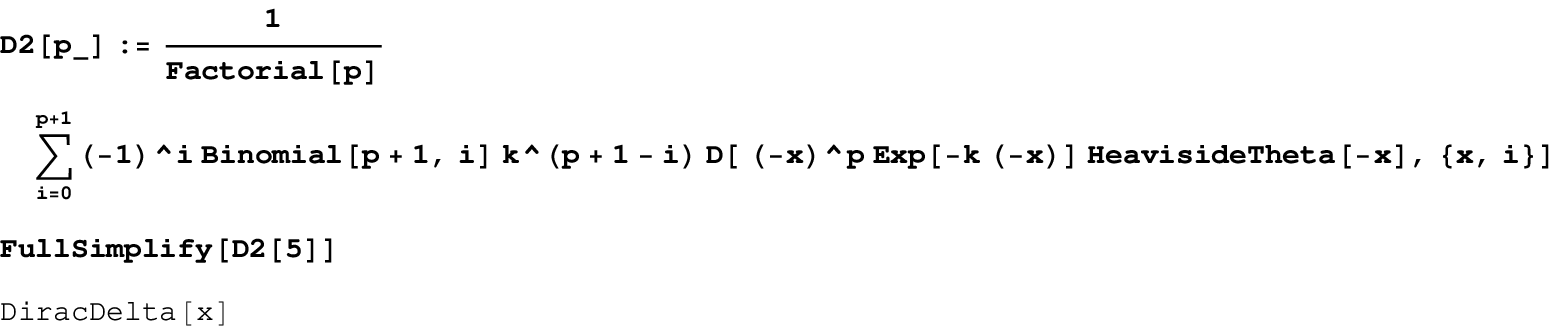}}
\caption{Mathematica commands to verify the proposition \ref{p2}.}
\label{math1}
\end{center}
\end{figure}
\end{proof}

To proceed we need to prove two other statements.
\begin{proposition} \label{p3}
Let us denote the kernels as
\begin{equation} \label{kernel}
    g^+(z-x) \equiv \lambda_R \dfrac{e^{-\nu_R |z-x|}}{|z-x|^{1+\alpha_R}}\mathbf{1}_{z-x>0}.
\end{equation}

Then
\begin{equation}\label{kernelS}
\mathcal{A}^-_x g^+(z-x) = \delta(z-x).
\end{equation}
\end{proposition}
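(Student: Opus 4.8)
The plan is to reduce the statement directly to Proposition \ref{p1} by an affine change of variable, avoiding any repetition of the Laplace-transform computation. Throughout I treat $z$ as a fixed parameter and $x$ as the active variable, and I set $w \equiv z-x$. Under this substitution $g^+(z-x) = g^+(w)$, with the indicator $\mathbf{1}_{z-x>0}$ becoming $\mathbf{1}_{w>0}$, so that $g^+(w) = \lambda_R\, e^{-\nu_R|w|}|w|^{-(1+\alpha_R)}\mathbf{1}_{w>0}$ is precisely the positive-jump kernel of Proposition \ref{p1} with parameters $\lambda_R,\nu_R$ and $p \equiv -(1+\alpha_R)$, which is a non-negative integer under the hypotheses $\alpha_R \in \mathbb{I}$, $\alpha_R < 0$.

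First I would record the elementary chain-rule identity that, acting on any (distribution-valued) function of $w = z-x$,
\be
\fp{}{x} = -\fp{}{w}.
\ee
Because the substitution is affine this operator identity propagates to every order, so iterating it yields $\left(\nu_R - \fp{}{x}\right)^{p+1} = \left(\nu_R + \fp{}{w}\right)^{p+1}$ as operators acting on functions of $w$.

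Next I would apply this to the operator $\mathcal{A}^-_x$ defined as in Proposition \ref{p2} (with parameters matched to the kernel $g^+$). We obtain
\bd
\mathcal{A}^-_x\, g^+(z-x) = \dfrac{1}{\lambda_R\, p!}\left(\nu_R - \fp{}{x}\right)^{p+1} g^+(w) = \dfrac{1}{\lambda_R\, p!}\left(\nu_R + \fp{}{w}\right)^{p+1} g^+(w) = \mathcal{A}^+_w\, g^+(w),
\ed
where the final equality is nothing but the definition of $\mathcal{A}^+$ from Proposition \ref{p1}. By Proposition \ref{p1} the right-hand side equals $\delta(w)$, and substituting back $w = z-x$ gives $\delta(z-x)$, which is the asserted identity.

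The only point requiring care --- and the one I would flag as the main obstacle --- is the joint bookkeeping of the sign flip and the distributional substitution: one must verify that the minus sign produced by the chain rule, $\fp{}{x} = -\fp{}{w}$, exactly cancels the sign difference between $\mathcal{A}^-$ and $\mathcal{A}^+$, and that the change of variable in $\delta$ carries unit Jacobian. This is precisely the conceptual content of the statement: when the differential operator acts in the outer variable $x$ rather than in the kernel's own argument, each differentiation reverses sign, so the minus-operator $\mathcal{A}^-_x$ is the one that reproduces the plus-operator identity of Proposition \ref{p1}. As an independent confirmation one could instead Laplace transform both sides in $x$, exactly as indicated for Propositions \ref{p1} and \ref{p2}, but the change-of-variable reduction makes the conclusion immediate.
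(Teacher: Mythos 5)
Your proof is correct and is essentially identical to the paper's own argument: both reduce $\mathcal{A}^-_x$ acting on $g^+(z-x)$ to $\mathcal{A}^+_{z-x}$ via the sign flip $\fp{}{x} = -\fp{}{(z-x)}$ and then invoke Proposition \ref{p1}. Your version merely spells out the bookkeeping (the affine substitution, the unit Jacobian in the delta function) that the paper leaves implicit.
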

\begin{proof}[{\bf Proof}]
\begin{align}
\mathcal{A}^-_x g^+(z-x) &= \dfrac{1}{\lambda p!}\left(\nu - \fp{}{x}\right)^{p+1} g^+(z-x) =
\dfrac{1}{\lambda p!}\left(\nu + \fp{}{(z-x)}\right)^{p+1} g^+(z-x) \nn \\
&= \mathcal{A}^+_{z-x} g^+(z-x) = \delta(z-x) \nn
\end{align}
\end{proof}

\begin{proposition} \label{p4}
Let us denote the kernels as
\begin{equation} \label{kernel1}
    g^-(z-x) \equiv \lambda_L \dfrac{e^{-\nu_L |z-x|}}{|z-x|^{1+\alpha_L}}\mathbf{1}_{z-x<0}.
\end{equation}

Then
\begin{equation}\label{kernelS1}
\mathcal{A}^+_x g^-(z-x) = \delta(z-x).
\end{equation}
\end{proposition}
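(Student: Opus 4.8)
The plan is to follow exactly the symmetric path used in the proof of Proposition \ref{p3}, merely interchanging the roles of the two one-sided operators. The structural fact that makes everything work is that $g^-(z-x)$ depends on $x$ and $z$ only through the single combined variable $w \equiv z-x$. Holding $z$ fixed, the chain rule gives $\fp{}{x} g^-(z-x) = -\fp{}{(z-x)} g^-(z-x)$, so that the building block $\nu + \fp{}{x}$ of the operator $\mathcal{A}^+_x$ becomes $\nu - \fp{}{(z-x)}$ once its action is rewritten in the variable $w$. This is precisely the building block of the operator $\mathcal{A}^-$, so $\mathcal{A}^+_x$ maps onto $\mathcal{A}^-_{z-x}$, which is the mirror image of what happens in Proposition \ref{p3}.

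Concretely, I would first write out, with parameters $\nu = \nu_L$, $\lambda = \lambda_L$ and $p = -(1+\alpha_L)$ chosen to match the kernel $g^-$,
\bd
\mathcal{A}^+_x g^-(z-x) = \dfrac{1}{\lambda p!}\left(\nu + \fp{}{x}\right)^{p+1} g^-(z-x).
\ed
Applying the identity $\fp{}{x} = -\fp{}{(z-x)}$ inside each of the $p+1$ factors of the power (equivalently, inside every term of the binomial expansion) replaces $\nu + \fp{}{x}$ by $\nu - \fp{}{(z-x)}$, giving
\bd
\mathcal{A}^+_x g^-(z-x) = \dfrac{1}{\lambda p!}\left(\nu - \fp{}{(z-x)}\right)^{p+1} g^-(z-x) = \mathcal{A}^-_{z-x} g^-(z-x).
\ed
Finally I would identify the right-hand side as an instance of Proposition \ref{p2}: setting $w = z-x$, the kernel $g^-(w)$ is exactly the function $\lambda_L e^{-\nu_L |w|}|w|^{-(1+\alpha_L)}\mathbf{1}_{w<0}$ annihilated up to the delta by $\mathcal{A}^-_w$ according to (\ref{greenFdef2}); hence $\mathcal{A}^-_{z-x} g^-(z-x) = \delta(z-x)$, which is the claim.

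The argument is in essence a single change of variable, so I do not anticipate a genuine obstacle. The only points demanding care are bookkeeping rather than mathematical: the operator $\mathcal{A}^+_x$ must be instantiated with the \emph{left}-jump parameters $(\nu_L, \lambda_L, \alpha_L)$ matching $g^-$ (not the right-jump parameters), and one should confirm that the sign flip from $\fp{}{x}$ to $-\fp{}{(z-x)}$ is applied consistently at every factor, so that the power genuinely collapses onto $\mathcal{A}^-_{z-x}$ and not onto $\mathcal{A}^+_{z-x}$. As in Proposition \ref{p3}, the interchange of differentiation with the distributional identity is justified by reading $\delta(z-x)$ in the weak sense.
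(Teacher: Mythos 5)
Your proposal is correct and follows essentially the same route as the paper's own proof: the chain-rule substitution $\fp{}{x} = -\fp{}{(z-x)}$ collapses $\mathcal{A}^+_x$ onto $\mathcal{A}^-_{z-x}$, after which Proposition \ref{p2} yields the delta function. You simply spell out the parameter bookkeeping more explicitly than the paper does.
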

\begin{proof}[{\bf Proof}]
\begin{align}
\mathcal{A}^+_x g^-(z-x) &= \dfrac{1}{\lambda p!}\left(\nu + \fp{}{x}\right)^{p+1} g^-(z-x) =
\dfrac{1}{\lambda p!}\left(\nu - \fp{}{(z-x)}\right)^{p+1} g^-(z-x) \nn \\
&= \mathcal{A}^-_{z-x} g^-(z-x) = \delta(z-x) \nn
\end{align}
\end{proof}

\paragraph{Transformation} We now apply the operator $\mathcal{A}^-_x$ to both parts of the Eq.~(\ref{intEq1})
to obtain
\begin{align} \label{operEq}
\mathcal{A}^-_x & \fp{}{t}C(x,t) = - \sqrt{V_R} \mathcal{A}^-_x \int_x^\infty C(z,t) g^+(z-x) dz
= -\sqrt{V_R} \left\{ \int_x^\infty C(z,t) \mathcal{A}^-_xg^+(z-x) dz + \mathcal{R} \right\}  \\
&=  -\sqrt{V_R} \left\{ \int_x^\infty C(z,t) \delta(z-x) dz + \mathcal{R}\right\} = -\dfrac{1}{2}\sqrt{V_R} C(x,t) - \sqrt{V_R} \mathcal{R} \nn
\end{align}

Here
\begin{equation} \label{extra}
\mathcal{R} = \sum_{i=0}^p a_i \left(\dfrac{\partial^{p-i}}{\partial x^{p-i}} V(x) \right)\left(\dfrac{\partial^i}{\partial x^i} g(z-x)\right)\Big|_{z-x=0},
\end{equation}

\ni and $a_i$ are some constant coefficients. As from the definition in the Eq.~(\ref{kernel}) $g(z-x) \propto (z-x)^p$, the only term in the Eq.~(\ref{extra}) which does not vanish is that at $i=p$. Thus
\begin{equation} \label{extra1}
\mathcal{R} = V(x) \left( \dfrac{\partial^p}{\partial x^p} g(z-x) \right) \Big|_{z-x=0} = V(x) p! \mathbf{1}_{(0)} = 0;
\end{equation}

With allowance for this expression from the  Eq.~(\ref{operEq}) we obtain the following pseudo parabolic equation for $C(x,t)$
\begin{equation}\label{operEq1}
    \mathcal{A}^-_x \fp{}{t}C(x,t) = - \dfrac{1}{2}\sqrt{V_R} C(x,t)
\end{equation}

Applying the operator $\mathcal{A}^+_x$ to both parts of the second equation in the Eq.~(\ref{intEq2}) and doing in the same way as in the previous paragraph we obtain the following pseudo parabolic equation for $C(x,t)$
\begin{equation}\label{operEq2}
    \mathcal{A}^+_x \fp{}{t}C(x,t) = -\dfrac{1}{2}\sqrt{V_L} C(x,t)
\end{equation}

\section{Solution of the pseudo parabolic equation} \label{nm1}
Assume that the inverse operator $\mathcal{A}^{-1}$ exists (see discussion later) we can represent, for instance, the Eq.~(\ref{operEq1}) in the form
\begin{equation}\label{operEq11}
   \fp{}{t}C(x,t) = - \mathcal{B} C(x,t), \quad \mathcal{B} \equiv \dfrac{1}{2}\sqrt{V_R} (\mathcal{A}^-_x)^{-1},
\end{equation}

This equation can be formally solved analytically to give
\begin{equation}\label{solOper1}
   C(x,t) = e^{\mathcal{B} (T-t)} C(x,T),
\end{equation}

\ni where $T$ is the time to maturity and $C(x,T)$ is payoff. Switching to a new variable $\tau=T-t$ to go backward in time we rewrite the Eq.~(\ref{solOper1}) as
\begin{equation}\label{solOper11}
   C(x,\tau) = e^{\mathcal{B} \tau} C(x,0),
\end{equation}

Below we consider numerical methods which allow one to compute this operator exponent with a prescribed accuracy. First we consider a straightforward approach when $\alpha \in \mathbb{I}$.

\subsection{Numerical method when $\alpha \in \mathbb{I}$} \label{nmInt}

Suppose that the whole time space is uniformly divided into $N$ steps, so the time step $\theta = T/N$ is known. Assuming that the solution at time step $k, 0 \le k < N$ is known and we go backward in time, we could rewrite the Eq.~(\ref{solOper1}) in the form
\begin{equation}\label{solOper2}
   C^{k+1}(x) = e^{\mathcal{B} \theta} C^k(x),
\end{equation}

\ni where $C^k(x) \equiv C(x,k\theta)$. To get representation of the rhs of the Eq.~(\ref{solOper2}) with given order of approximation in $\theta$, we can substitute the whole exponential operator with its Pad\'e approximation of the corresponding order $m$.

First, consider the case $m=1$. A symmetric Pad\'e approximation of the order $(1,1)$ for the exponential operator is
\begin{equation}\label{Pade1}
e^{\mathcal{B} \theta} = \dfrac{1 + \mathcal{B}\theta/2}{1 - \mathcal{B}\theta/2}
\end{equation}

Substituting this into the Eq.~(\ref{solOper2}) and affecting both parts of the equation by the operator ${1 - \mathcal{B}\theta/2}$ gives
\begin{equation}\label{discrete}
\left(1 - \dfrac{1}{2}\mathcal{B}\theta\right) C^{k+1}(x) =  \left(1 + \dfrac{1}{2}\mathcal{B}\theta\right) C^k(x).
\end{equation}

This is a discrete equation which approximates the original solution given in the Eq.~(\ref{solOper2}) with the second order in $\theta$. One can easily recognize in this scheme a famous Crank-Nicolson scheme.

We do not want to invert the operator $\mathcal{A}^-_x$ in order to compute the operator $\mathcal{B}$ because $\mathcal{B}$ is an integral operator. Therefore, we will apply the operator $\mathcal{A}^-_x$ to the both sides of the Eq.~(\ref{discrete}). The resulting equation is a pure differential equation and reads
\begin{equation}\label{discrete2}
\left( \mathcal{A}^-_x - \dfrac{\sqrt{V_R}}{4} \theta\right) C^{k+1}(x) =  \left(\mathcal{A}^-_x + \dfrac{\sqrt{V_R}}{4} \theta\right) C^k(x).
\end{equation}

Let us work with the operator $\mathcal{A}^-_x$ (for the operator $\mathcal{A}^+_x$ all corresponding results can be obtained in a similar way). The operator $\mathcal{A}^-_x$ contains derivatives in $x$ up to the order $p+1$. If one uses a finite difference representation of these derivatives the resulting matrix in the rhs of the Eq.~(\ref{discrete2}) is a band matrix. The number of diagonals in the matrix depends on the value of $p = -(1+\alpha_R) > 0$. For central difference approximation of derivatives of order $d$ in $x$ with the order of approximation $q$ the matrix will have at least $l = d+q$ diagonals, where it appears that $d+q$ is necessarily an odd number (\cite{Eberly}). Therefore, if we consider a second order approximation in $x$, i.e. $q=2$ in our case the number of diagonals is $l = p+3 = 2 - \alpha_R$.

As the rhs matrix $\mathcal{D} \equiv \mathcal{A}^-_x - \sqrt{V_R}\theta/4$ is a band matrix the solution of the corresponding system of linear equations in the Eq.~(\ref{discrete2}) could be efficiently obtained using a modern technique (for instance, using a ScaLAPACK package). The computational cost for the LU factorization of an N-by-N matrix with lower bandwidth $P$ and upper bandwidth $Q$ is $2 N P Q$ (this is an upper bound) and storage-wise - $N(P+Q)$. So in our case of the symmetric matrix the cost is
$(1-\alpha_R)^2 N/2$ performance-wise and $N(1-\alpha_R)$ storage-wise. This means that the complexity of our algorithm is still $O(N)$ while the constant $(1-\alpha_R)^2/2$ could be large.

A typical example could be if we solve our PDE using an $x$-grid with 300 nodes, so $N=300$. Suppose $\alpha_R = -10$. Then the complexity of the algorithm is $60 N = 18000$. Compare this with the FFT algorithm complexity which is $(34/9) 2N \log_2 (2N) \approx 20900$ \footnote{ We use $2N$ instead of $N$ because in order to avoid undesirable wrap-round errors a common technique is to embed a discretization Toeplitz matrix into a circulant matrix. This requires to double the initial vector of unknowns}, one can see that our algorithm is of the same speed as the FFT.

The case $m=2$ could be achieved either using symmetric (2,2) or diagonal (1,2) Pad\'e approximations of the operator exponent. The (1,2) Pad\'e approximation reads
\begin{equation}\label{Pade21}
e^{\mathcal{B} \theta} = \dfrac{1 + \mathcal{B}\theta/3}{1 - 2\mathcal{B}\theta/3 + \mathcal{B}^2\theta^2/6},
\end{equation}

\ni and the corresponding finite difference scheme for the solution of the Eq.~(\ref{solOper2}) is
\begin{equation}\label{fd21}
\left[(\mathcal{A}^-_x)^2 - \dfrac{1}{3}\sqrt{V_R} \theta \mathcal{A}^-_x + \dfrac{1}{24}V_R \theta^2 \right]
C^{k+1}(x) =  \mathcal{A}^-_x\left[\mathcal{A}^-_x + \dfrac{1}{6}\sqrt{V_R} \theta\right] C^k(x).
\end{equation}

\ni which is of the third order in $\theta$. The (2,2) Pad\'e approximation is
\begin{equation}\label{Pade22}
e^{\mathcal{B} \theta} = \dfrac{1 + \mathcal{B}\theta/2 + \mathcal{B}^2\theta^2/12}{1 - \mathcal{B}\theta/2 + \mathcal{B}^2\theta^2/12},
\end{equation}

and the corresponding finite difference scheme for the solution of the Eq.~(\ref{solOper2}) is
\begin{equation}\label{fd22}
\left[(\mathcal{A}^-_x)^2 - \dfrac{1}{4}\sqrt{V_R} \theta\mathcal{A}^-_x + \dfrac{1}{48}V_R \theta^2 \right]
C^{k+1}(x) =  \left[(\mathcal{A}^-_x)^2 + \dfrac{1}{4}\sqrt{V_R} \theta \mathcal{A}^-_x + \dfrac{1}{48}V_R \theta^2 \right] C^k(x),
\end{equation}

\ni which is of the fourth order in $\theta$.

Matrix of the operator $(\mathcal{A}^-_x)^2$ has $2l-1$ diagonals, where $l$ is the number of diagonals of the matrix $\mathcal{A}^-_x$. Thus, the finite difference equations Eq.~(\ref{fd21}) and Eq.~(\ref{fd22}) still have band matrices and could be efficiently solved using an appropriate technique.

\subsection{Stability analysis}

Stability analysis of the derived finite difference schemes could be provided using a standard von-Neumann method. Suppose that operator $\mathcal{A}^-_x$ has eigenvalues $\zeta$ which belong to continuous spectrum. Any finite difference approximation of the operator $\mathcal{A}^-_x$ - $FD(\mathcal{A}^-_x)$ - transforms this continuous spectrum into some discrete spectrum, so we denote the eigenvalues of the discrete operator $FD(\mathcal{A}^-_x)$ as $\zeta_i, i=1,N$, where $N$ is the total size of the finite difference grid.

Now let us consider, for example, the Crank-Nicolson scheme given in the Eq.~(\ref{discrete2}). It is stable if in some norm $\|\cdot\|$
\begin{equation}\label{norm}
    \Bigg\| \left( \mathcal{A}^-_x - \dfrac{\sqrt{V_R}}{4} \theta\right)^{-1}\left(\mathcal{A}^-_x + \dfrac{\sqrt{V_R}}{4} \theta\right) \Bigg\| < 1.
\end{equation}

It is easy to see that this inequality obeys when all eigenvalues of the operator $\mathcal{A}^-_x$ are negative. However, based on the definition of this operator given in the Proposition~\ref{p2}, it is clear that the central finite difference approximation of the first derivative does not give rise to a full negative spectrum of eigenvalues of the operator $FD(\mathcal{A}^-_x)$. So below we define a different approximation.

\paragraph{Case ${\bf \alpha_R < 0}$.}
Therefore, in this case we will use a one-sided forward approximation of the first derivative which is a part of the operator $\left(\nu_R - \dfrac{\partial}{\partial x}\right)^{\alpha_R}$. Define $h = (x_{max} - x_{min})/N$ to be the grid step in the $x$-direction, $N$ is the total number of steps, $x_{min}$ and $x_{max}$ are the left and right boundaries of the grid. Also define $c_i^k = C^k(x_i)$.
To make our method to be of the second order in $x$ we use the following numerical approximation
\begin{equation}\label{forward1der}
    \fp{C^k(x)}{x} = \dfrac{-C_{i+2}^k + 4 C_{i+1}^k - 3 C_{i}^k}{2 h} + O(h^2)
\end{equation}

Matrix of this discrete difference operator has the following form
\begin{equation}\label{mat2}
M_f = \dfrac{1}{2h}\left(
\begin{array}{ccccc}
-3 & 4 & -1 & 0 & ...0 \\
0 & -3 & 4 & -1 & ...0 \\
0 & 0 & -3 & 4 & ...0 \\
.. & .. & .. & .. & .. \\
0 & 0... & 0 & 0 & -3 \\
\end{array}
\right)
\end{equation}

All eigenvalues of $M_f$ are equal to $-3/(2 h)$.

To get a power of the matrix $M$ we use its spectral decomposition, i.e. we represent it in the form $M = E D E'$, where $D$ is a diagonal matrix of eigenvalues $d_{i}, i=1,N$ of the matrix $M$, and $E$ is a matrix of eigenvectors of the matrix $M$. Then $M^{p+1} = E D^{p+1} E'$, where the matrix $D^{p+1}$ is a diagonal matrix with elements $d_{i}^{p+1}, i=1,N$. Therefore, the eigenvalues of the matrix $\left(\nu_R - \fp{}{x}\right)^{\alpha_R}$ are $\left[\nu_R + 3/(2h)\right]^{\alpha_R}$. And, consequently, the eigenvalues of the matrix $\mathbb{B}$ are
\begin{equation}\label{eigB}
    \zeta_\mathbb{B} = \sqrt{V_R} \lambda_R \Gamma(-\alpha_R) \left\{\left[\nu_R + 3/(2h)\right]^{\alpha_R} - \nu_R^{\alpha_R} \right\}.
\end{equation}

As $\alpha_R < 0$ and $\nu_R > 0$ it follows that $\zeta_\mathbb{B} < 0$.
Rewriting the Eq.~(\ref{discrete}) in the form
\begin{equation}\label{CNgen1}
C^{k+1}(x) =  \left(1 - \dfrac{1}{2}\mathcal{B}\theta\right)^{-1} \left(1 + \dfrac{1}{2}\mathcal{B}\theta\right) C^k(x),
\end{equation}

\ni and taking into account that $\zeta_\mathbb{B} < 0$ we arrive at the following result
\begin{equation}\label{norm1}
    \Bigg\| \left(1 - \dfrac{1}{2}\mathcal{B}\theta\right)^{-1} \left(1 + \dfrac{1}{2}\mathcal{B}\theta\right) \Bigg\| < 1.
\end{equation}

We also obey the condition $\mathbb{R}\left(\nu_R - \fp{}{x}\right) > 0$. Thus, our numerical method is unconditionally stable.

\paragraph{Case ${\bf \alpha_L < 0}$.} In this case we will use a one-sided backward approximation of the first derivative in the operator $\left(\nu_L + \dfrac{\partial}{\partial x}\right)^{\alpha_L}$ which reads
\begin{equation}\label{backward1der}
    \fp{C^k(x)}{x} = \dfrac{3 C_{i}^k - 4 C_{i-1}^k + C_{i-2}^k}{2 h} + O(h^2)
\end{equation}

Matrix of this discrete difference operator has the following form
\begin{equation}\label{mat1}
M_b = \dfrac{1}{2h}\left(
\begin{array}{ccccc}
3 & 0 & 0 & 0 & ...0 \\
-4 & 3 & 0 & 0 & ...0 \\
1 & -4 & 3 & 0 & ...0 \\
.. & .. & .. & .. & .. \\
0 & 0... & 1 & -4 & 3 \\
\end{array}
\right)
\end{equation}

All eigenvalues of $M_b$ are equal to $3/(2 h)$. Then doing in a similar way as above we can show that the eigenvalues of the operator $\mathbb{B}$ read
\begin{equation}\label{eigB2}
    \zeta_\mathbb{B} = \sqrt{V_L} \lambda_L \Gamma(-\alpha_L) \left\{\left[\nu_L + 3/(2h)\right]^{\alpha_L} - \nu_L^{\alpha_L} \right\}.
\end{equation}

As $\alpha_L < 0$ and $\nu_L > 0$ it follows that $\zeta_\mathbb{B} < 0, \ \mathbb{R}\left(\nu_L + \fp{}{x}\right) > 0$, and the numerical method in this case is unconditionally stable.

\subsection{Numerical examples}

Here we describe two series of numerical experiments. In the first series we solve the equation
\begin{equation} \label{numEq}
\fp{}{\tau} C(x,\tau) = \int_0^{\infty} C(x+y,\tau)\lambda_R \dfrac{e^{-\nu_R |y|}}{|y|^{1+\alpha_R}} dy,
\quad \alpha_R < -1
\end{equation}

by using FFT and the finite difference scheme constructed based on computation of the Eq.~(\ref{numEq}) with $\alpha_R \in \mathbb{I}$ and interpolation as it was described in section \ref{nmInt}.

We solve an initial problem despite it is easy to consider a boundary problem as well. We consider a put option with time to maturity $T$ = 30 days. As the terminal condition (we compute the solution backward in time) we chose a Black-Scholes put value at $\tau=0$ where the interest rate is $r = 0.01$, the volatility is $0.1$ and the strike is $K=100$. We create a uniform grid in time with $N_t = 50$ nodes, so $\theta = T/N_t$ is the step in time.

\paragraph{FFT.}
To apply an FFT approach we first select a domain in $x$ space where the values of function $C(x,\tau$) are of our interest. Suppose this is $x \in (-x_*, x_*)$. We define a uniform grid in this domain which contains $N$ points: $x_1 = -x_*, x_2,...x_{N-1}, x_N = x_*$ such that $x_i - x_{i-1} = h, i=2...N$. We then approximate the integral in the rhs of the Eq.~(\ref{numEq}) with the first order of accuracy in $h$ as
\begin{equation} \label{numEq1}
\int_0^{\infty} C(x+y,\tau)\lambda_R \dfrac{e^{-\nu_R |y|}}{|y|^{1+\alpha_R}} dy =
\dfrac{h}{2} \sum_{j=1-i}^{N-i} C_{i+j}(\tau) f_j, \quad f_j \equiv \lambda_R \dfrac{e^{-\nu_R |x_j|}}{|x_j|^{1+\alpha_R}} + O(h^2).
\end{equation}

This approximation means that we have to extend our computational domain to the left up to $x_{1-N} = x_1 - h N$.

The matrix $|f|$ is a Toeplitz matrix. Using FFT directly to compute a matrix-vector product in the Eq.~(\ref{numEq1}) will produce a wrap-round error that significantly lowers the accuracy. Therefore a standard technique is to embed this Toeplitz matrix into a circulant matrix $\mathcal{F}$ which is defined as follows. The first row of $F$ is
\bd
F_1 = (f_0, f_1, ..., f_{N-1},0, f_{1-N},...,f_{-1}),
\ed
\ni and others are generated by permutation (see, for instance, \cite{ZhangWang2009}). We also define a vector
\bd
\hat{C} = [C_1(\tau),...C_N(\tau),\underbrace{0,...,0}_N]^T.
\ed
Then the matrix-vector product in the rhs Eq.~(\ref{numEq1}) is given by the first N rows in the vector $V = \mbox{ifft}(\mbox{fft}(F_1)*\mbox{fft}(\hat{C}))$, where \mbox{fft} and \mbox{ifft} are the forward and inverse discrete Fourier transforms as they are defined, say in Matlab. In practice, an error at edge points close to $x_1$ and $x_N$ is higher, therefore it is useful first to add some points left to $x_1$ and right to $x_N$ and then apply the above described algorithm to compute the integral. We investigated some test problems, for instance, where the function $C$ was chosen as $C(x) = x$ so the integral can be computed analytically. Based on the obtained results we found that it is useful to extend the computational domain adding $N/2$ points left to $x_1$ and right to $x_N$ that provides an accurate solution in the domain $x_1,...,x_N$. The drawback of this is that the resulting circulant matrix has $4N$ x $4N$ elements that increases the computational work by 4 times ($4N \log_2(4N) \approx 4(N \log_2 N)$).

In our calculations we used $x_* = 20, h = 2x_*/N$ regardless of the value of $N$ which varies in the experiments. Then we extended the domain to $x_1 = -x_* - h(N/2-1), x_N = x_* + h(N/2+1)$, and so this doubles the originally chosen value of $N$, i.e. $N_{new} = 2N$. But the final results were analyzed at the domain $x \in (-x_*,x*)$.

Integrating the Eq.~(\ref{numEq}) in time we use an explicit Euler scheme of the first order which is pretty fast. This is done in order to provide the worst case scenario for the below FD scheme. Thus, if our FD scheme is comparable in speed with FFT in this situation it will even better if some other more accurate integration schemes are applied together with the FFT.

\paragraph{FD.}
We build a fixed grid in the $x$ space by choosing $S_{min}  = 10^{-8}, S_{max} = 500, x_{1} = \log(S_{min}),
x_{N} = \log(S_{max}), h = (x_N - x_1)/N, N = 256$. A one-sided forward approximation of the first derivative was used as it is defined in the Eq.~(\ref{forward1der}) to approximate the operator in the Eq.~(\ref{numEq}). In the particular case considered here in our experiments $V_R \equiv 1$, and the compensators in the Eq.~(\ref{pide_init}) are not considered, because they could be integrated out at $\alpha_R < 0$ and added to the diffusion terms. The Crank-Nicolson scheme Eq.~(\ref{CNgen1}) was applied to integrate the Eq.~(\ref{numEq}) in time.

\paragraph{Results}
The first series of tests was provided when $\alpha_R \in \mathbb{I}$ and $\nu_R = 1, \lambda_R = 0.2$. The results of this series are presented in Fig.~\ref{Fig1}-\ref{Fig6}.
\begin{figure}[t!]
\vspace{0.1 in}
\hspace{0.5 in}
\centering
\fbox{\includegraphics[totalheight=3.5in]{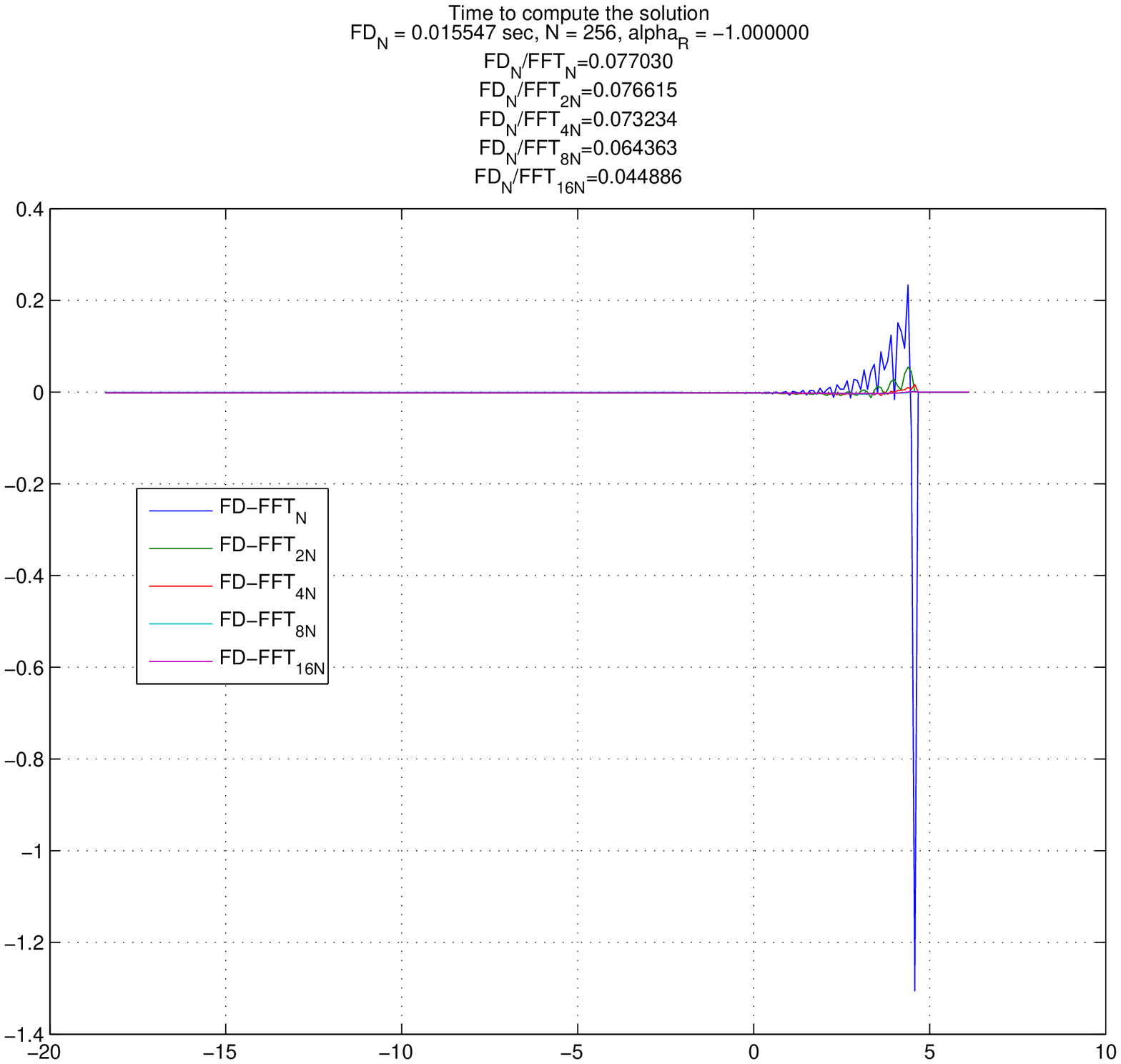}}
\caption{Difference (FD-FFT) in solutions of the Eq.~(\ref{numEq}) as a function of $x$ obtained using our finite-difference method (FD) and an explicit Euler scheme in time where the jump integral is computed using FFT. $\alpha_R = -1$.}
\label{Fig1}
\end{figure}

\begin{figure}[t!]
\vspace{0.1 in}
\hspace{0.5 in}
\centering
\fbox{\includegraphics[totalheight=3.5in]{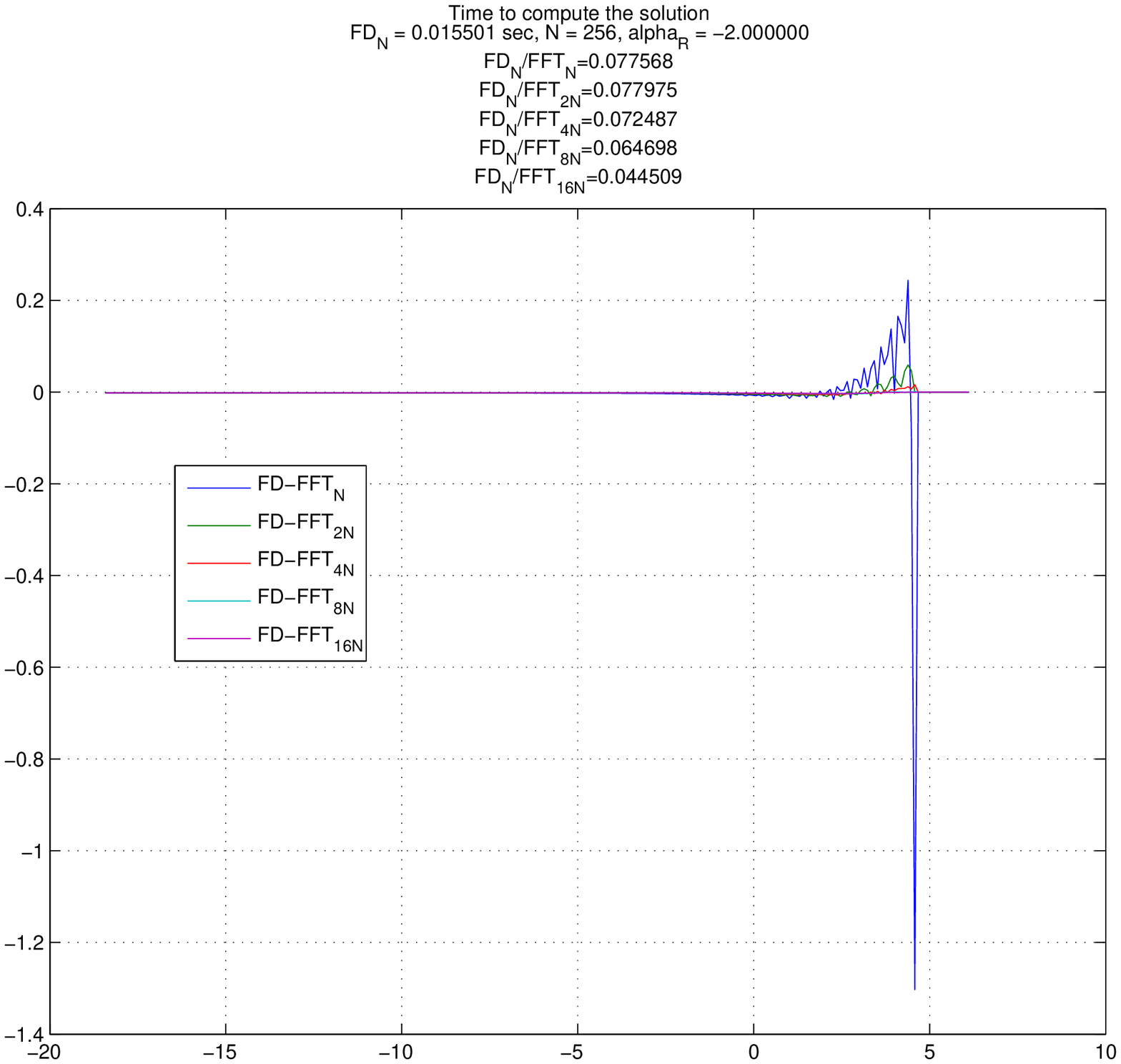}}
\caption{Same as in Fig.~\ref{Fig1}. $\alpha_R = -2$.}
\label{Fig2}
\end{figure}

\begin{figure}[t!]
\vspace{0.1 in}
\hspace{0.5 in}
\centering
\fbox{\includegraphics[totalheight=3.5in]{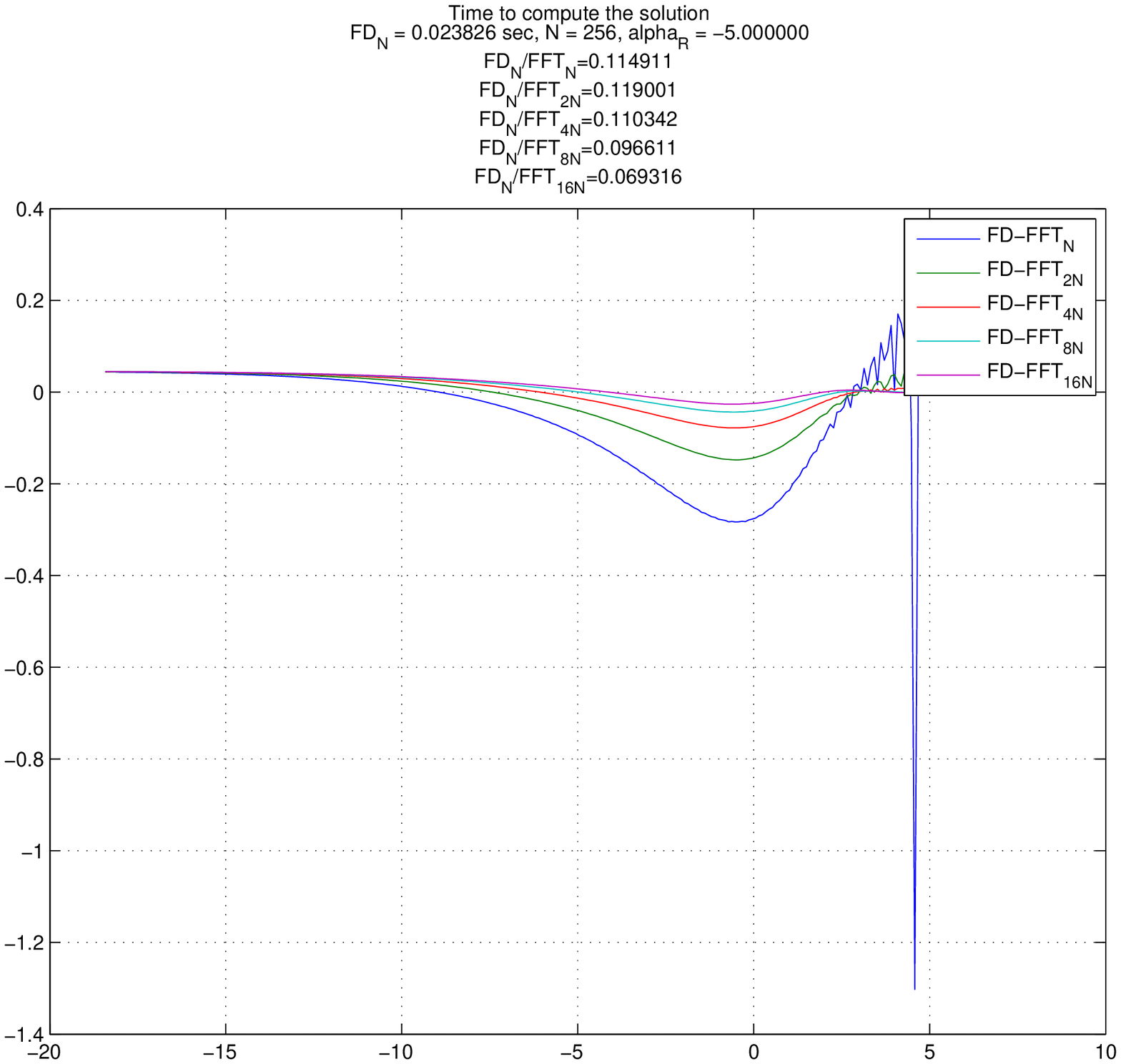}}
\caption{Same as in Fig.~\ref{Fig1}. $\alpha_R = -5$.}\label{Fig5}
\end{figure}

\begin{figure}[t!]
\vspace{0.1 in}
\hspace{0.5 in}
\centering
\fbox{\includegraphics[totalheight=3.5in]{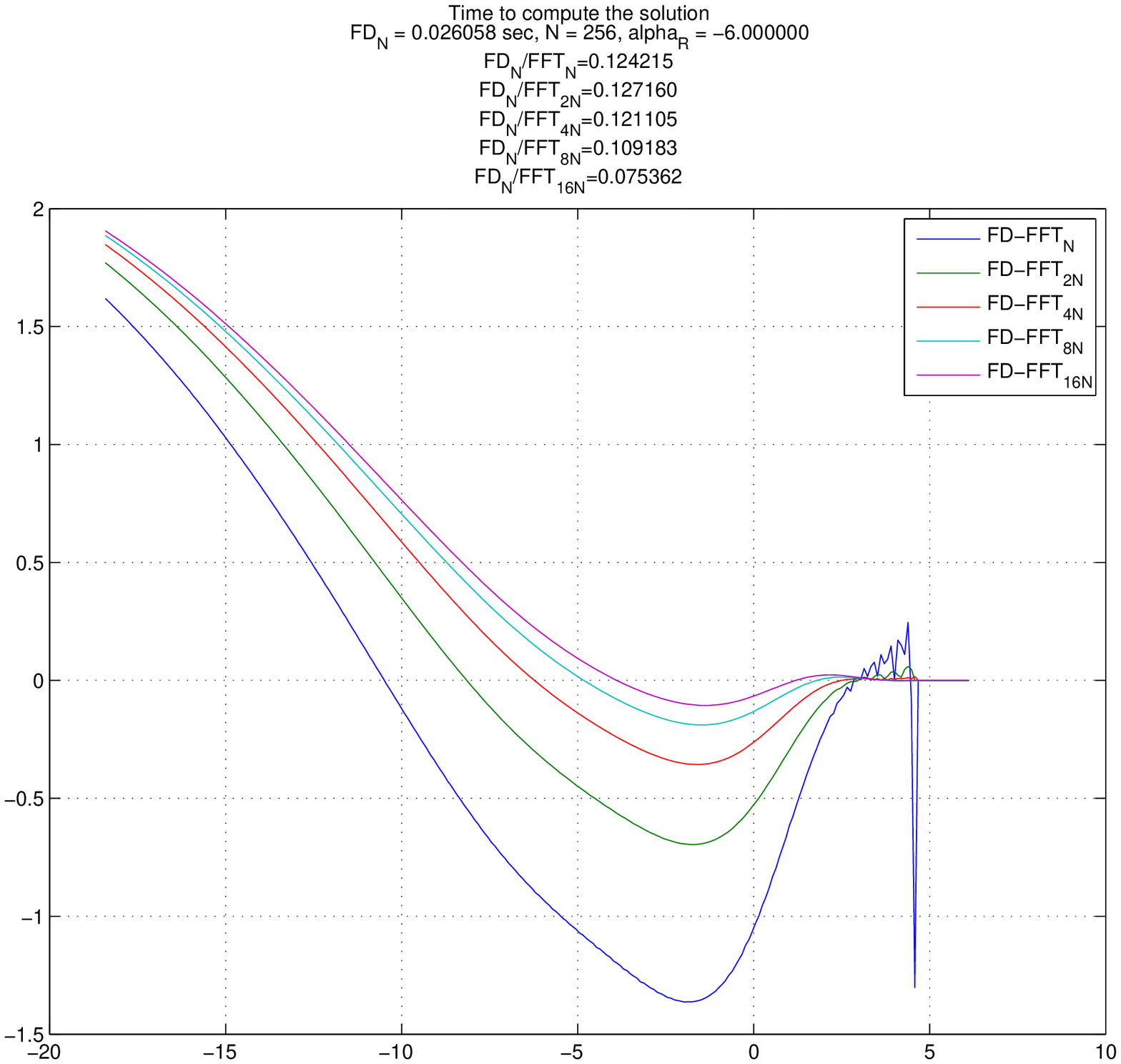}}
\caption{Same as in Fig.~\ref{Fig1}. $\alpha_R = -6$.}
\label{Fig6}
\end{figure}

In case $\alpha_R = -1$ in Fig.~\ref{Fig1} the FFT solution computed with $N=256$ provides a relatively big error which disappears with $N$ increasing. It is clear, because the Crank-Nicolson scheme is of the second order in $h$ while the approximation Eq.~(\ref{numEq1}) of the integral is of the first order in $h$. Numerical values of the corresponding steps in the described experiments are given in Tab.~\ref{Tabst}.

\begin{table}[t!]
\begin{center}
\begin{tabular}{|c|c|c|c|c|c|c|}
\hline
& FD$_{256}$ & FFT$_{256}$ & FFT$_{512}$ & FFT$_{1024}$ & FFT$_{2048}$ & FFT$_{4096}$ \cr
\hline
h & 0.096 & 0.1563 & 0.078 & 0.039 & 0.0195 & 0.00977 \cr
\hline
\end{tabular}
\caption{Grid steps $h$ used in the numerical experiments}
\label{Tabst}
\end{center}
\end{table}

Therefore, $h_{FD}^2 \approx h_{FFT_{16}}$. Actually, the difference between the FD solution with $N_{FD}=256$ and the FFT one with $N=4N_{FD}$ is almost negligible. However, the FD solution is computed almost 13 times faster. Even the FFT solution with $N=N_{FD}$ is 10 times slower than the FD one\footnote{It actually uses $4N$ points as it was already discussed}.

For $\alpha_R = -2$ in Fig.~\ref{Fig2} we see almost the same picture. For $\alpha_R = -5$ speed characteristics of both solutions are almost same while the accuracy of the FD solution decreases. This is especially pronounced for $\alpha_R = -6$ in Fig.~\ref{Fig6} at low values of $x$. The problem is that when $\alpha_R$  decreases the eigenvalues of matrix $\mathcal{B}$ in the Eq.~(\ref{discrete}) grow significantly (in our tests at $\alpha_R = -6$ the eigenvalues are of order of $10^{7}$), so in the Eq.~(\ref{norm1}) the norm of matrix is very close to 1. Thus the FD method becomes just an A-stable. However, a significant difference is observed mostly at very low values of $x$ which correspond to the spot price $S = \exp(x)$ close to zero. For a boundary problem this effect is partly dumped by the boundary condition at the low end of the domain.

The second series of tests deals with $\alpha_R \in \mathbb{R}$ using the same parameters $\nu_R = 1, \lambda_R = 0.2$. The results of this series are presented in Fig.~\ref{Fig1_5}-\ref{Fig0_5}. Four point cubic interpolation is used to compute the value of $C(x,\tau)$ at real $\alpha_R$ using the closest four integer values of $\alpha_R$.

\begin{figure}[t!]
\vspace{0.1 in}
\hspace{0.5 in}
\centering
\fbox{\includegraphics[totalheight=3.5in]{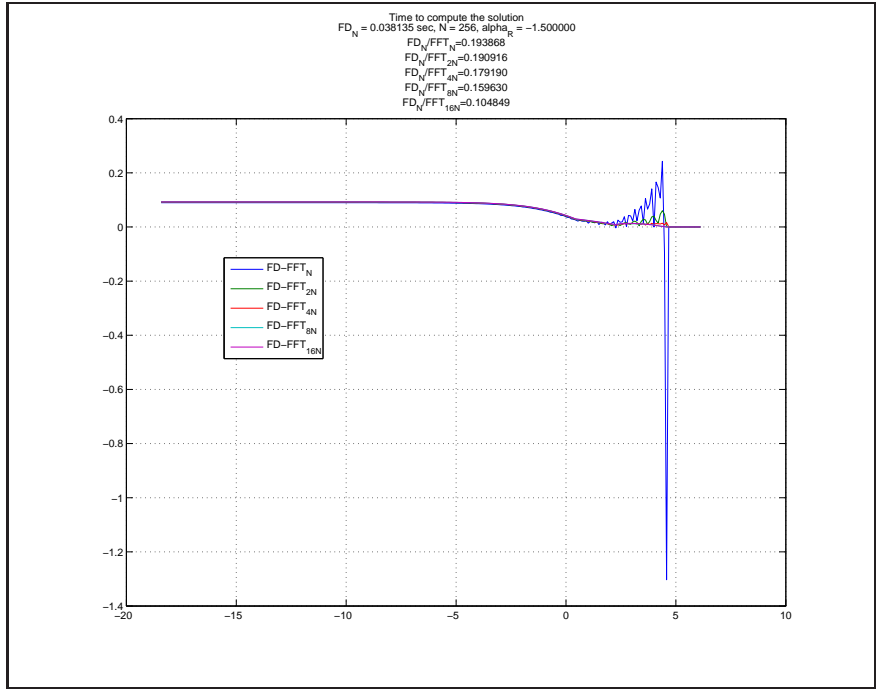}}
\caption{Difference (FD-FFT) in solutions of the Eq.~(\ref{numEq}) as a function of $x$ at $\alpha_R \in \mathbb{R}$ obtained using our finite-difference method (FD) and interpolation and an explicit Euler scheme in time where the jump integral is computed using FFT. $\alpha_R = -1.5$.}
\label{Fig1_5}
\end{figure}

\begin{figure}[t!]
\vspace{0.1 in}
\hspace{0.5 in}
\centering
\fbox{\includegraphics[totalheight=3.5in]{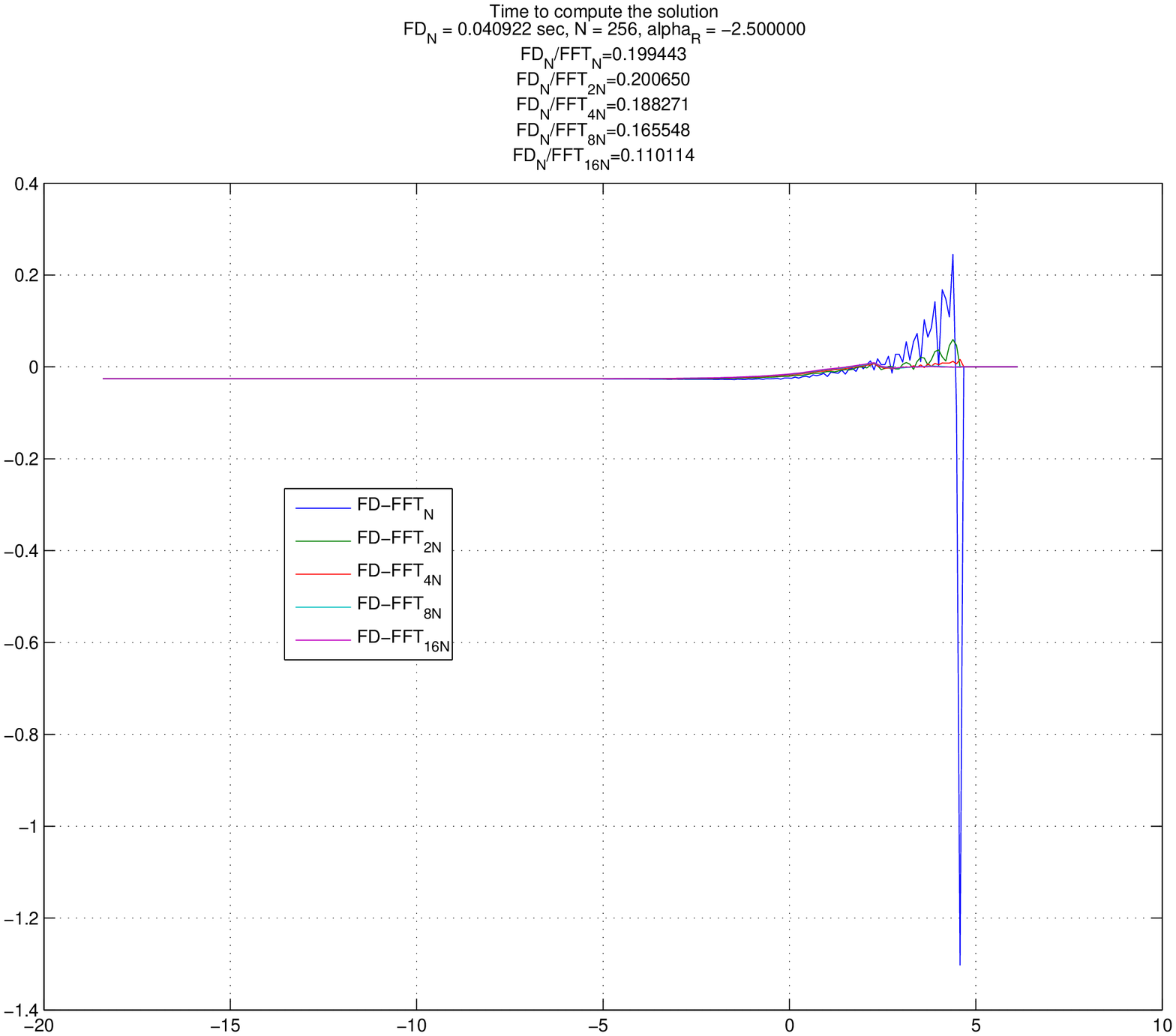}}
\caption{Same as in Fig.~\ref{Fig1_5}. $\alpha_R = -2.5$.}
\label{Fig2_5}
\end{figure}

\begin{figure}[t!]
\vspace{0.1 in}
\hspace{0.5 in}
\centering
\fbox{\includegraphics[totalheight=3.5in]{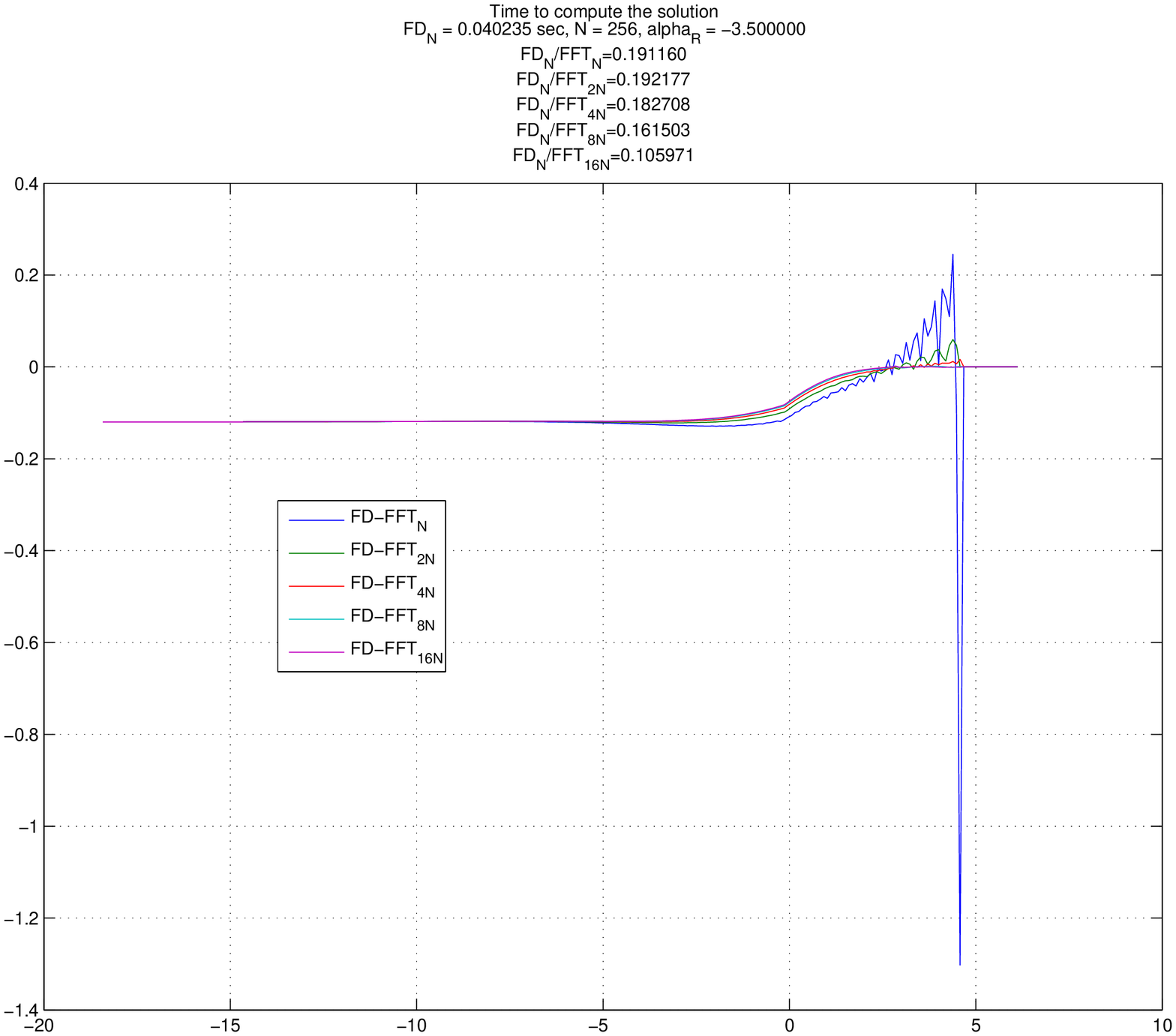}}
\caption{Same as in Fig.~\ref{Fig1_5}. $\alpha_R = -3.5$.}
\label{Fig3_5}
\end{figure}

\begin{figure}[t!]
\vspace{0.1 in}
\hspace{0.5 in}
\centering
\fbox{\includegraphics[totalheight=3.5in]{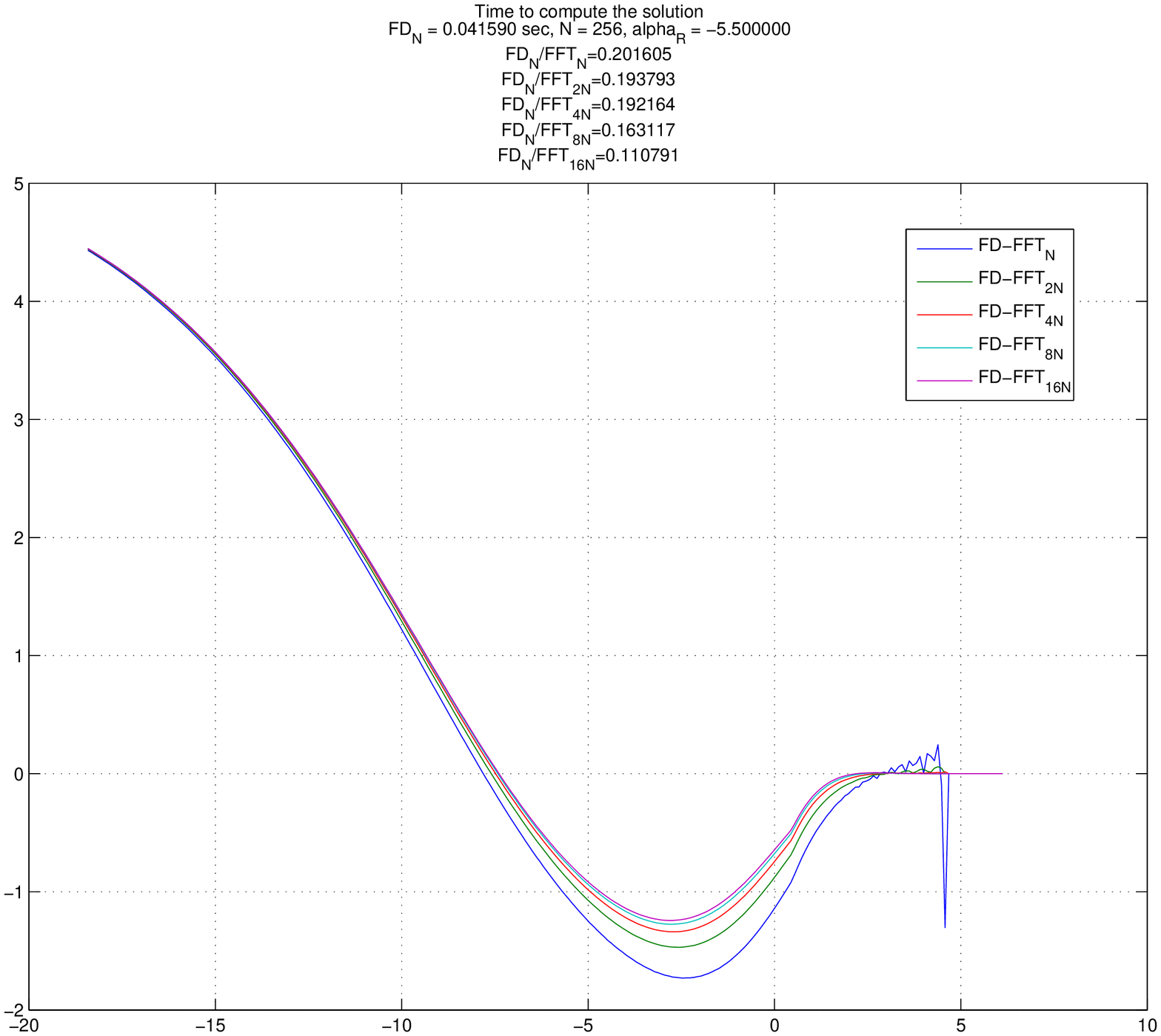}}
\caption{Same as in Fig.~\ref{Fig1_5}. $\alpha_R = -5.5$.}
\label{Fig5_5}
\end{figure}

\begin{figure}[t!]
\vspace{0.1 in}
\hspace{0.5 in}
\centering
\fbox{\includegraphics[totalheight=3.5in]{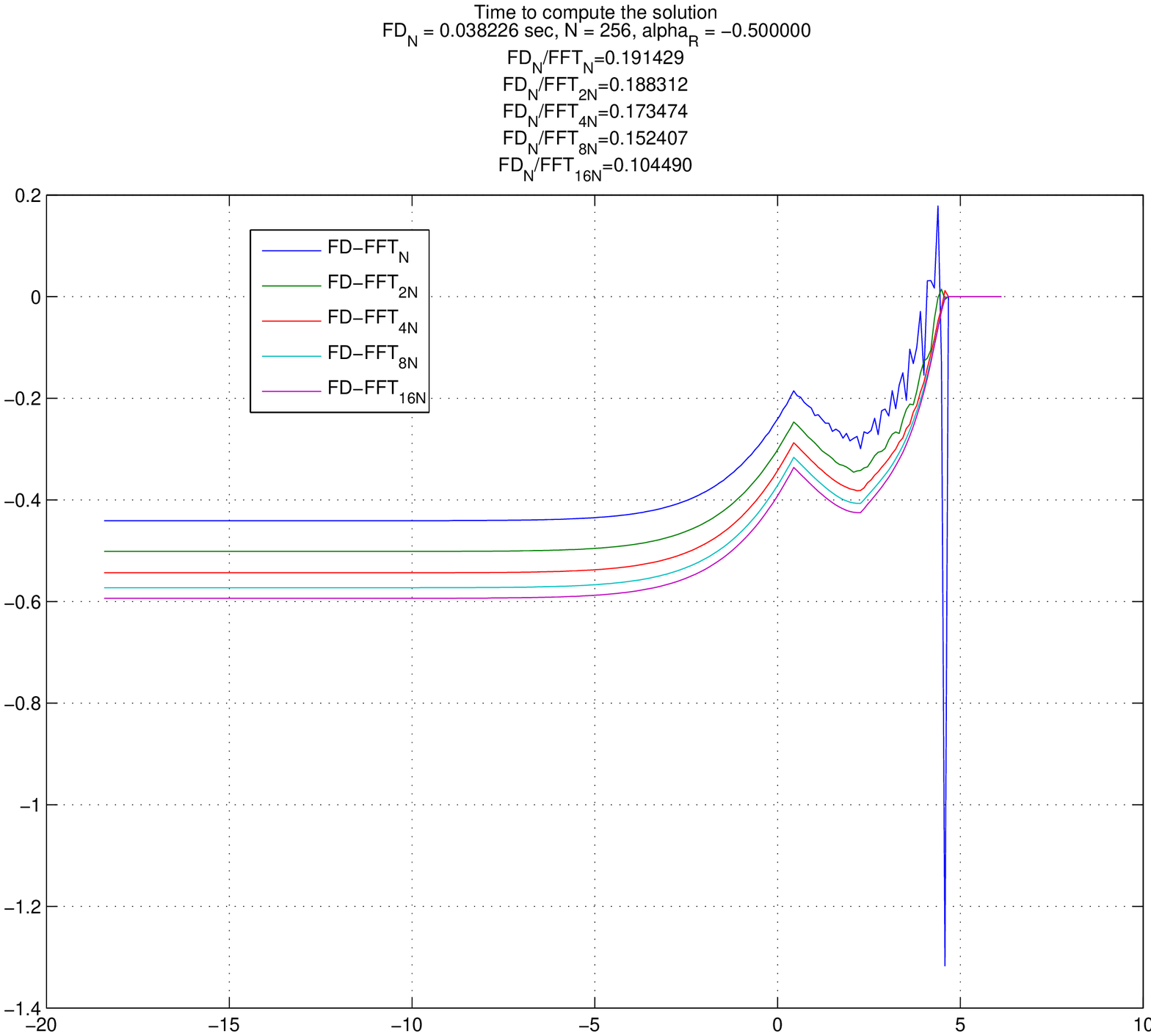}}
\caption{Same as in Fig.~\ref{Fig1_5}. $\alpha_R = -0.5$.}
\label{Fig0_5}
\end{figure}

It is seen that cubic interpolation provides pretty good approximation to the solution which is comparable with the FFT method in the accuracy and is faster in speed. Again, as we already discussed at $\alpha_R < 5$ the accuracy of the FD scheme drops down even for $\alpha_R \in \mathbb{R}$, therefore the same picture is observed for $\alpha_R \in \mathbb{R}$.

At $-1 < \alpha_R < 0$ (see Fig.~\ref{Fig0_5}) the difference between FD and FFT solutions surprisingly increases with $N$, used in the FFT method, increasing. To better understand what is the reason of that we fulfilled a test calculation of the integral in the rhs of the Eq.~(\ref{numEq}) when $C(x,\tau)$ is a known function, namely $C(x,\tau) \equiv x$. In this case this integral can be computed analytically which gives
\begin{equation} \label{testEq}
\int^\infty_0 (x+y)\dfrac{e^{-\nu_R|y|}}{|y|^{1+\alpha_R}} dy = (x\nu_R - \alpha_R) \nu_R^{\alpha_R-1}\Gamma(-\alpha_R).
\end{equation}

Then we apply the above described FFT approach and compare the numerical solution with the analytical one. The results of this test are given in Fig.~\ref{test_problem}. It is seen that FFT algorithm used in our calculations doesn't provide a good approximation to the analytical solutions at low $N$. So we expect this behavior of the FFT method occurred in our numerical experiments at $\alpha_R = -0.5$, but this doesn't explain the observed effect.

A plausible explanation is that at $\alpha_R$ close to $0$ the integral kernel becomes singular. That is why in \cite{ContVolchkova2003}
the part of the infinitesimal generator corresponding to small jumps is approximated by a differential operator of second order (additional diffusion component).
As we didn't use this technique here, an increase of $N$ forces the distance between $y=0$ and the closest FFT node boundary to become smaller, thus the kernel becomes larger.

The other reason for the FD solution to differ from the FFT solution is that at $-1 < \alpha_R < 0$ we don't use the option values computed at $\alpha_R = 0$ (remember, this is a special case that was discussed earlier). Thus, instead of interpolation we use extrapolation that certainly decreases the accuracy of the FD solution. We will resolve this problem in the next section.

\begin{figure}[t!]
\vspace{0.1 in}
\hspace{0.5 in}
\centering
\fbox{\includegraphics[totalheight=3.5in]{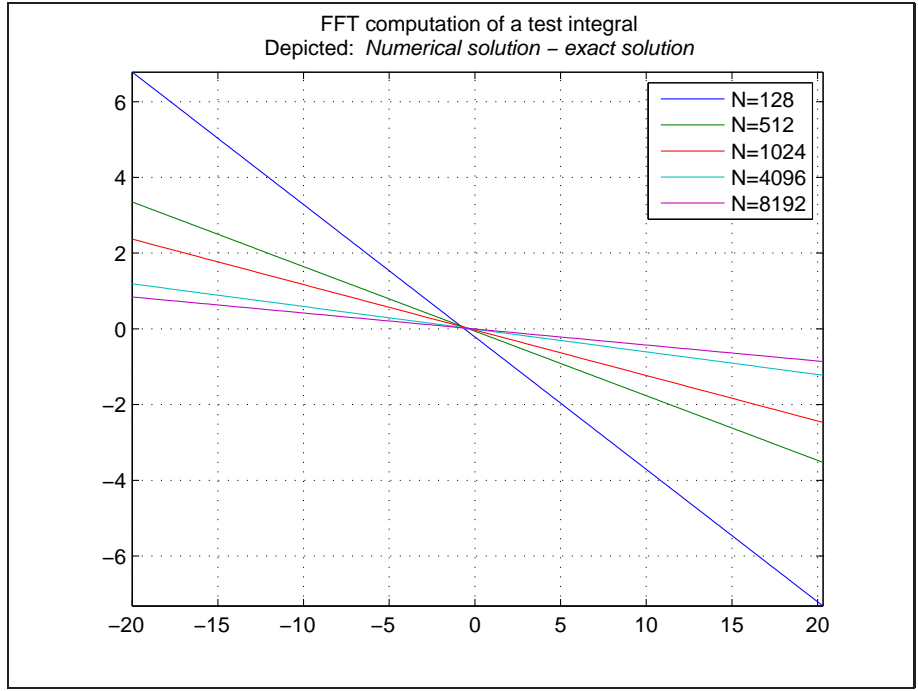}}
\caption{FFT computation of a test integral in the Eq.~(\ref{testEq})}
\label{test_problem}
\end{figure}

At the end of this section we present the option values computed using such a scheme as a function of $x$ obtained in the same test (Fig.~\ref{Calphas}).
\begin{figure}[t!]
\vspace{0.1 in}
\hspace{0.5 in}
\centering
\fbox{\includegraphics[totalheight=3.5in]{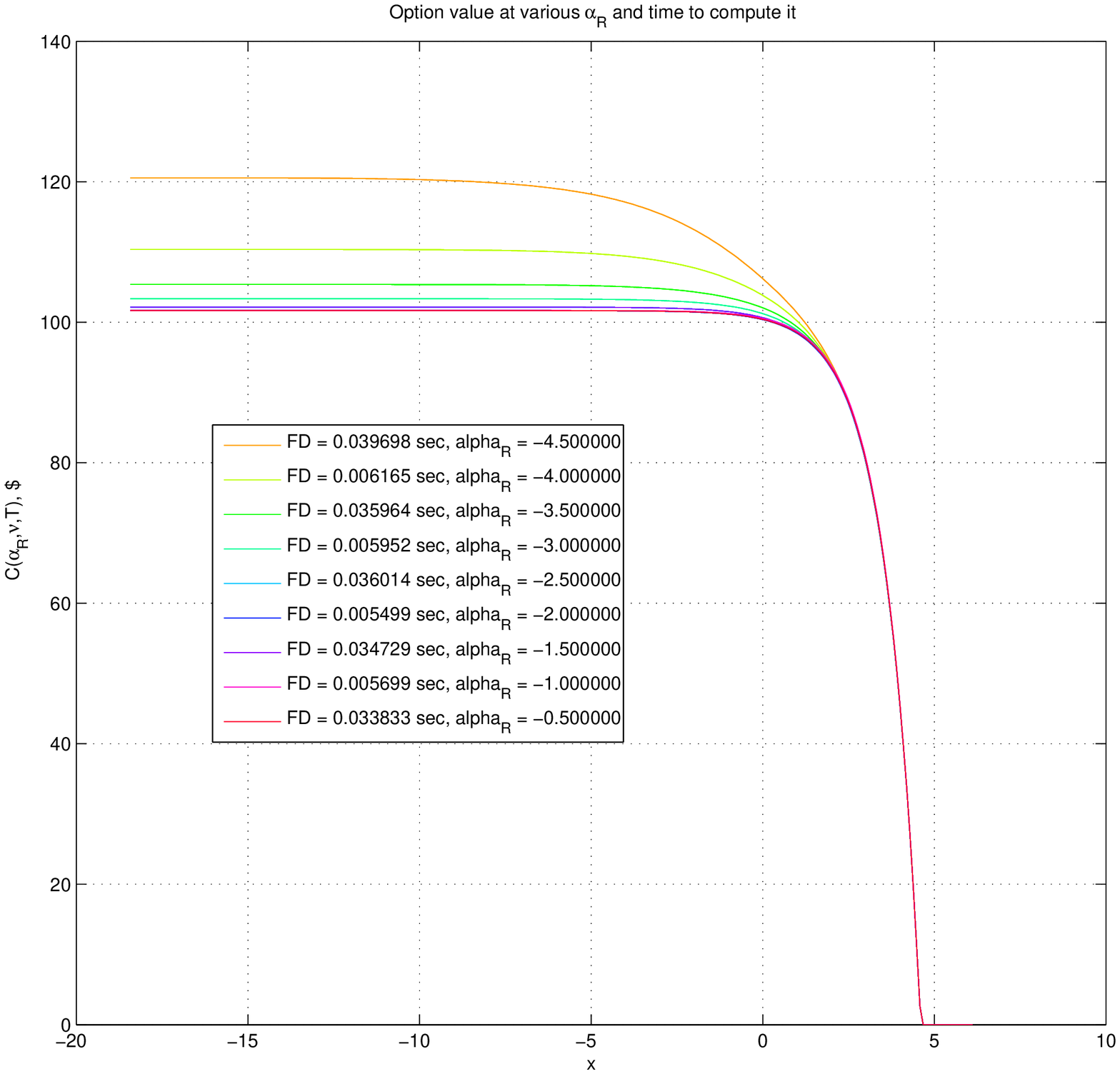}}
\caption{Option values computed using such a scheme as a function of $x$ obtained in the same test}
\label{Calphas}
\end{figure}

\section{General case} \label{Sgc}
If we take a more close look at the propositions \ref{p1} and \ref{p2} we could recognize that the assumption $\alpha \in \mathbb{I}$ could be neglected while both propositions will remain valid. This could be easily seen based on the following equalities

\begin{proposition} \label{p5}
Assume that in the Eq.~(\ref{greenFdef}) $\alpha \in \mathbb{R}, \alpha \le -1$. Then the solution of the Eq.~(\ref{greenFdef}) with respect to $\mathcal{A}^+_x$ is
\bd
\mathcal{A}^+_x = \dfrac{1}{\lambda \Gamma(p+1)}\left(\nu + \fp{}{x}\right)^{p+1} \equiv \dfrac{1}{\lambda \Gamma(p+1)}\left[
\sum_{i=0}^{\infty} C^{p+1}_{i} \nu^{p+1-i} \dfrac{\partial^i}{\partial x^i}\right], \quad p \equiv -(1 + \alpha) \ge 0,
\ed

\ni where $C^{p+1}_{i}$ are the generalized binomial coefficients which could be expressed via Gamma function, and fractional derivatives are understood in the Riemann-Liouville sense (\cite{fracDer})
\end{proposition}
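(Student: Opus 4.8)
The plan is to establish the identity $\mathcal{A}^+_x\left[\lambda \dfrac{e^{-\nu|x|}}{|x|^{1+\alpha}}\mathbf{1}_{x>0}\right] = \delta(x)$ for real $\alpha \le -1$ by Laplace transform, mirroring the integer case of Proposition \ref{p1} but now interpreting $(\nu+\partial_x)^{p+1}$ as a fractional power. First I would recall that for $x>0$ the kernel is $\lambda x^{p}e^{-\nu x}$ with $p=-(1+\alpha)\ge 0$ now a nonnegative real, so the target identity to verify is that the operator $\tfrac{1}{\lambda\Gamma(p+1)}(\nu+\partial_x)^{p+1}$ annihilates $\lambda x^p e^{-\nu x}$ away from the origin and produces a unit-mass delta at $x=0$. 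The key computational input is the Laplace transform $\int_0^\infty e^{-sx}\,x^{p}e^{-\nu x}\,dx = \Gamma(p+1)(s+\nu)^{-(p+1)}$, valid for $\Re(s+\nu)>0$ and $p>-1$.

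The central step is to show that applying $(\nu+\partial_x)^{p+1}$ corresponds, on the Laplace side, to multiplication by $(s+\nu)^{p+1}$, so that the transform of $\mathcal{A}^+_x$ acting on the kernel is $\tfrac{1}{\lambda\Gamma(p+1)}(s+\nu)^{p+1}\cdot\lambda\Gamma(p+1)(s+\nu)^{-(p+1)}=1$, and inverting $1$ gives exactly $\delta(x)$. For integer $p+1$ this is the ordinary Laplace shift rule combined with the binomial expansion already displayed in Proposition \ref{p1}. For real $p+1$ the operator $(\nu+\partial_x)^{p+1}$ must be read through the binomial series $\sum_{i=0}^{\infty}C^{p+1}_i \nu^{p+1-i}\partial_x^i$ with the generalized (Gamma-function) coefficients, where the fractional powers $\partial_x^i$ are Riemann--Liouville derivatives; I would invoke the standard rule that the Riemann--Liouville derivative of order $i$ has Laplace symbol $s^i$ (up to boundary terms that vanish because the kernel and its relevant derivatives decay, and because $x^p e^{-\nu x}\to 0$ as $x\to 0^+$ for $p>0$). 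Summing the series then reproduces $(s+\nu)^{p+1}$ by the generalized binomial theorem, closing the identity.

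The main obstacle I anticipate is the rigorous handling of the fractional shift operator $(\nu+\partial_x)^{p+1}$: unlike the integer case, the binomial series is an infinite sum of Riemann--Liouville derivatives, so I must justify (i) term-by-term Laplace transformation and interchange of summation with the transform, which requires uniform convergence or a dominated-convergence argument on the series $\sum_i C^{p+1}_i \nu^{p+1-i}s^i$ in a suitable right half-plane $\Re s > -\nu$, and (ii) that the boundary contributions in the Riemann--Liouville Laplace rule genuinely vanish for the specific kernel $x^p e^{-\nu x}$ when $p$ is non-integer. A clean alternative, which I would prefer, is to define $(\nu+\partial_x)^{p+1}$ directly as the operator with Laplace symbol $(s+\nu)^{p+1}$ and then show this coincides with the Riemann--Liouville prescription; this sidesteps convergence of the explicit series and reduces the proof to the single algebraic cancellation $(s+\nu)^{p+1}(s+\nu)^{-(p+1)}=1$, with the series representation recorded only as the formal expansion the author displays. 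The remaining care is purely in citing the Riemann--Liouville Laplace correspondence from \cite{fracDer} and confirming its hypotheses hold here.
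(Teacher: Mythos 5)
Your proposal follows essentially the same route as the paper's Proof~1 of Proposition~\ref{p5}: take the Laplace transform, use $\mathcal{L}_s\{\lambda x^p e^{-\nu x}\mathbf{1}_{x>0}\} = \lambda\Gamma(p+1)(\nu+s)^{-(p+1)}$, identify the symbol of $(\nu+\partial_x)^{p+1}$ as $(\nu+s)^{p+1}$ via the generalized binomial series of Riemann--Liouville derivatives, and invert the resulting constant $1$ to $\delta(x)$. You are in fact more explicit than the paper about the points that need justification (term-by-term transformation of the infinite series and vanishing of the Riemann--Liouville boundary terms), which the paper passes over silently, but the argument is the same one.
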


\begin{proof}[{\bf Proof 1}]
Taking Laplace Transform of the expression $\mathcal{A^+} f(x)$ we obtain
\begin{align}
\mathcal{L}_s\left(\mathcal{A}^+_x f(x) \right) &=  \mathcal{L}_s\Biggl\{\dfrac{1}{\lambda \Gamma(p+1)}  \left[\sum_{i=0}^{\infty} C^{p+1}_{i} \nu^{p+1-i} \dfrac{\partial^i}{\partial x^i}\right]f(x)\Biggr\} =  \dfrac{1}{\lambda \Gamma(p+1)} & \left[\sum_{i=0}^{\infty} C^{p+1}_{i} \nu^{p+1-i} s^i \right]  \mathcal{L}_s f(x) \nn \\
&=  \dfrac{1}{\lambda \Gamma(p+1)}(\nu + s)^{p+1} \mathcal{L}_s f(x) \nn
\end{align}

Now, as
\bd
f(x) = \lambda \dfrac{e^{-\nu |x|}}{|x|^{1+\alpha}}\mathbf{1}_{y>0}
\ed

\ni and
\bd
\mathcal{L}_s \Biggl\{ \lambda \dfrac{e^{-\nu |x|}}{|x|^{1+\alpha}}\mathbf{1}_{x>0} \Biggr\} = \lambda \Gamma (p+1) (\nu+s)^{-(1+p)},
\ed

\ni we obtain
\bd
\mathcal{L}_s\left(\mathcal{A}^+_x f(x) \right) = 1 = \mathcal{L}_s \delta(x)
\ed

And thus $\mathcal{A}^+_x f(x) = \delta(x)$
\end{proof}

For the operator $\mathcal{A}^-_x$ the proof is similar.

Another proof is based on a different idea.
\begin{proof}[{\bf Proof 2}]
As it is well known a shift operator in L2 space could be represented as follows
\begin{equation}\label{shift}
    \mathfrak{S}_a = \exp \left( a \dfrac{\partial}{\partial x} \right),
\end{equation}

\ni so
\bd
\mathfrak{S}_a f(x) = f(x+a).
\ed

Therefore, the integrals in the Eq.~(\ref{intX}) could be formally rewritten as
\begin{align}\label{intGen}
\mathcal{A}_1 C(x,t), &\quad \mathcal{A}_1 \equiv
 \int_0^{\infty} \lambda_R \dfrac{e^{-\nu_R |y|}}{|y|^{1+\alpha_R}} \exp \left( y \dfrac{\partial}{\partial x} \right) dy \\
\mathcal{A}_2 C(x,t), &\quad \mathcal{A}_2 \equiv
 \int^0_{-\infty} \lambda_L \dfrac{e^{-\nu_L |y|}}{|y|^{1+\alpha_L}} \exp \left( y \dfrac{\partial}{\partial x} \right) dy \nn
\end{align}

We can compute these integrals assuming that $\partial/ \partial x$ is a constant. This gives
\begin{align}\label{intRes}
\mathcal{A}_1 &= \lambda_R \Gamma(-\alpha_R) \left( \nu_R - \dfrac{\partial}{\partial x} \right)^{\alpha_R}, \quad \mathbb{R} (\alpha) < 0, \mathbb{R}(\nu_R - \partial/ \partial x) > 0 \\
\mathcal{A}_2 &= \lambda_L \Gamma(-\alpha_L)\left( \nu_L + \dfrac{\partial}{\partial x} \right)^{\alpha_L}, \quad \mathbb{R}(\alpha) < 0, \mathbb{R}(\nu_L + \partial/ \partial x) > 0, \nn
\end{align}

\ni where under a real part of differential operator we will understand the real part of the maximum eigenvalue of finite difference matrix which approximates this differential operator (see below).

A simple observation shows that
\bd
\mathcal{A}_1 = \left( \mathcal{A}_x^- \right)^{-1}, \quad \mathcal{A}_2 = \left( \mathcal{A}_x^+ \right)^{-1},
\ed

\ni which finalizes the proof.

\end{proof}

This means that the whole analysis of the previous sections made in the case $\alpha \in \mathbb{I}$ is still valid for arbitrary $\alpha \in \mathbb{R}, \alpha < 0$. Moreover, we could now extend this proof for the whole range of $\alpha < 2$. In order to do that we have to consider the whole integrals in the Eq.~(\ref{pide_init}). This is because in the case of jumps with infinite activity or infinite variation the second and third integrands can not be integrated out, because they do not exist.

If we apply the second transformation to the first equation in the Eq.~(\ref{intX}) the result is given by the following proposition.

\begin{proposition} \label{p6}
The PIDE
\begin{align} \label{orig}
\fp{}{\tau}& C(x,V_R,V_L,\tau) = \nn \\
 &\sqrt{V_R} \int_0^{\infty} \left[C(x+y,V_R,V_L,\tau) - C(x,V_R,V_L,\tau) - \fp{}{x} C(x,V_R,V_L,\tau) (e^y-1)  \right] \lambda_R \frac{e^{-\nu_R |y| }}{|y|^{1+\alpha_R}} dy
\end{align}

is equivalent to PDE
\begin{align} \label{whole}
\fp{}{\tau} C(x,V_R,V_L,\tau) &= \sqrt{V_R} \lambda_R \Gamma(-\alpha_R) \left\{ \left(\nu_R - \dfrac{\partial}{\partial x}\right)^{\alpha_R} - \nu_R^{\alpha_R} + \left[ \nu_R^{\alpha_R} - (\nu_R-1)^{\alpha_R}\right] \dfrac{\partial}{\partial x} \right\}C(x,V_R,V_L,\tau), \nn \\
& \mathbb{R}(\alpha_R) < 2, \ \mathbb{R}(\nu_R - \partial/ \partial x) > 0, \ \mathbb{R}(\nu_R) > 1.
\end{align}

In special cases this equation changes to
\begin{align} \label{whole0}
\fp{}{\tau} C(x,V_R,V_L,\tau) &= \sqrt{V_R}\lambda_R  \left\{ \log(\nu_R) - \log \left(\nu_R - \dfrac{\partial}{\partial x} \right) + \log \left(\dfrac{\nu_R-1}{\nu_R}\right)\dfrac{\partial}{\partial x}  \right\} C(x,V_R,V_L,\tau)\\
& \alpha_R = 0, \mathbb{R}(\nu_R - \partial/ \partial x) > 0, \mathbb{R}(\nu_R) > 1, \nn
\end{align}

\ni and
\begin{align} \label{whole1}
\fp{}{\tau} & C(x,V_R,V_L,\tau) = \sqrt{V_R}\lambda_R \Big\{ - \nu_R \log \nu_R + (\nu_R- \fp{}{x})\log \left(\nu_R-\fp{}{x}\right) \\
&+ \left[\nu_R \log \nu_R - (\nu_R-1) \log (\nu_R-1)\right]\fp{}{x}   \Big\} C(x,V_R,V_L,\tau) \nn \\
& \alpha_R = 1, \mathbb{R}(\partial/ \partial x) < 0, \mathbb{R}(\nu_R) > 1, \nn
\end{align}

\ni where logarithm of the differential operator is defined in a sense of (\cite{logOfDif}).
\end{proposition}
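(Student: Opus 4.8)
The plan is to follow the shift-operator route of Proof 2 of Proposition~\ref{p5}. Writing the shift operator as in the Eq.~(\ref{shift}), $\mathfrak{S}_y = \exp(y\,\partial/\partial x)$, I would replace $C(x+y,V_R,V_L,\tau)$ by $\exp(y\,\partial/\partial x)\,C(x,V_R,V_L,\tau)$ inside the integral of the Eq.~(\ref{orig}). Since the two remaining terms $-C(x,\dots)$ and $-\fp{}{x}C(x,\dots)(e^y-1)$ are already expressed through $C$ and its first derivative at the point $x$, the whole right-hand side factors as one integro-differential operator acting on $C$:
\bd
\fp{}{\tau}C = \sqrt{V_R}\,\lambda_R \left[\int_0^\infty \Bigl(e^{y\,\partial/\partial x} - 1 - (e^y-1)\fp{}{x}\Bigr)\dfrac{e^{-\nu_R y}}{y^{1+\alpha_R}}\,dy\right] C.
\ed

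Next I would evaluate the bracketed operator by treating $\partial/\partial x$ as a scalar $s$, exactly as in passing from the Eq.~(\ref{intGen}) to the Eq.~(\ref{intRes}), splitting it into three elementary integrals of the form $\int_0^\infty e^{-p y} y^{-1-\alpha_R}\,dy = \Gamma(-\alpha_R)\,p^{\alpha_R}$ with $p = \nu_R - s$, then $p=\nu_R$, and finally $p = \nu_R-1$ (the last coming from $e^y\cdot e^{-\nu_R y}$). Summing the three contributions gives
\bd
\Gamma(-\alpha_R)\left\{(\nu_R - s)^{\alpha_R} - \nu_R^{\alpha_R} + \bigl[\nu_R^{\alpha_R} - (\nu_R-1)^{\alpha_R}\bigr]s\right\},
\ed
and restoring $s = \partial/\partial x$ reproduces the operator of the Eq.~(\ref{whole}). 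The stated conditions $\mathbb{R}(\nu_R - \partial/\partial x) > 0$ and $\mathbb{R}(\nu_R) > 1$ are precisely what guarantees the existence of the three Laplace-type integrals (the second condition is forced by the factor $e^y$ in the compensator).

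The main obstacle is the range of $\alpha_R$. Each of the three integrals above converges only for $\mathbb{R}(\alpha_R)<0$, whereas the claim is asserted for $\mathbb{R}(\alpha_R)<2$. I would resolve this by a compensation and continuation argument: expanding the symbol near the origin gives $e^{ys}-1-(e^y-1)s = \tfrac12 y^2(s^2-s)+O(y^3)$, so the \emph{combined} integrand behaves like $y^{1-\alpha_R}$ and the single compensated integral converges for $\mathbb{R}(\alpha_R)<2$. Since $\Gamma(-\alpha_R)p^{\alpha_R}$ is the meromorphic continuation of each piece, the closed form of the Eq.~(\ref{whole}) persists throughout $\mathbb{R}(\alpha_R)<2$ by uniqueness of analytic continuation. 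This is the step I expect to require the most care, because it is exactly where subtracting the compensator (rather than integrating it out, as was legitimate in the finite-variation case) becomes essential.

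Finally, for the special values $\alpha_R=0$ and $\alpha_R=1$ the prefactor $\Gamma(-\alpha_R)$ has a simple pole, while a direct check shows the brace vanishes there (at $\alpha_R=1$, for instance, $(\nu_R-s)-\nu_R+[\nu_R-(\nu_R-1)]s=0$), so the product is an indeterminate form. I would extract the finite limit using $p^{\alpha_R}=1+\alpha_R\log p+O(\alpha_R^2)$ together with $\Gamma(-\alpha_R)\sim -1/\alpha_R$ as $\alpha_R\to 0$, which collapses the powers into logarithms and yields the Eq.~(\ref{whole0}); the limit $\alpha_R\to 1$ is obtained the same way by carrying the expansion one order further (the brace being $O(\alpha_R-1)$ there), producing the $p\log p$ terms of the Eq.~(\ref{whole1}). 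The logarithm of the operator $\nu_R-\partial/\partial x$ in these limits is understood in the sense of \cite{logOfDif}, consistently with the statement.
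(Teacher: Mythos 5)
Your proposal follows the same route as the paper's own proof: rewrite the integrand via the shift operator $\exp\left(y\,\partial/\partial x\right)$ so the right-hand side becomes a single operator acting on $C$, then integrate formally in $y$ treating $\partial/\partial x$ as a scalar parameter. You additionally supply the details the paper leaves as ``could be verified'' --- the splitting into three $\Gamma$-function integrals with $p=\nu_R-s,\ \nu_R,\ \nu_R-1$, the analytic-continuation argument extending the formula to $\mathbb{R}(\alpha_R)<2$, and the pole-of-$\Gamma$ versus vanishing-brace limits that produce the logarithmic special cases at $\alpha_R=0,1$ --- and these all check out.
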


\begin{proof}[{\bf Proof}]
We again use the shift operator introduced in the Eq.~(\ref{shift}) to rewrite the Eq.~(\ref{orig}) as
\begin{align} \label{whole1p}
\fp{}{\tau} & C(x,V_R,V_L,\tau) = \mathcal{B}_1 C(x,V_R,V_L,\tau) \nn \\
& \mathcal{B}_1 \equiv \sqrt{V_R} \int_0^{\infty} \left[ \exp \left( y \dfrac{\partial}{\partial x} \right) - 1 - (e^y-1) \fp{}{x} \right] \lambda_R \frac{e^{-\nu_R |y| }}{|y|^{1+\alpha_R}} dy
\end{align}

Formal integration could be fulfilled if we treat a differential operator $\fp{}{x}$ as a parameter. As it could be verified the result is that given in the Eq.~(\ref{whole}). Same method is used to prove the formulae given  in the special cases $\alpha_R = 0$ and $\alpha_R = 1$.
\end{proof}

Also notice that at $\alpha_R = 0$ from the very beginning the last term in the Eq.~(\ref{orig}) can be moved from the integral to the diffusion part of the Eq.~(\ref{pide_init}) because the remaining kernel converges at $y=0$. If we do so, at this special case the integrated equation transforms to
\begin{align} \label{whole0-1}
\fp{}{\tau} C(x,V_R,V_L,\tau) &= \sqrt{V_R}\lambda_R  \left\{ \log(\nu_R) - \log \left(\nu_R - \dfrac{\partial}{\partial x} \right) \right\} C(x,V_R,V_L,\tau)\\
& \alpha_R = 0, \mathbb{R}(\nu_R - \partial/ \partial x) > 0, \mathbb{R}(\nu_R) > 0, \nn
\end{align}

This form is more useful as we show later when elaborating a numerical method to solve it.
\\

The same approach could be utilized for the second equation in the Eq.~(\ref{intX}), and the result is given by the following proposition.

\begin{proposition} \label{p7}
The PIDE
\begin{align} \label{origM}
\fp{}{\tau}& C(x,V_R,V_L,\tau) = \nn \\
 &\sqrt{V_L} \int_{-\infty}^0 \left[C(x+y,V_R,V_L,\tau) - C(x,V_R,V_L,\tau) - \fp{}{x} C(x,V_R,V_L,\tau) (e^y-1)  \right] \lambda_L \frac{e^{-\nu_L |y| }}{|y|^{1+\alpha_L}} dy
\end{align}

is equivalent to PDE
\begin{align} \label{wholeM}
\fp{}{\tau} C(x,V_R,V_L,\tau) &= \sqrt{V_L} \lambda_L \Gamma(-\alpha_L) \left\{ \left(\nu_L + \dfrac{\partial}{\partial x}\right)^{\alpha_L} - \nu_L^{\alpha_L} + \left[ \nu_L^{\alpha_L} - (\nu_L+1)^{\alpha_L}\right] \dfrac{\partial}{\partial x} \right\}C(x,V_R,V_L,\tau), \nn \\
& \mathbb{R}(\alpha_L) < 2, \ \mathbb{R}(\nu_L + \partial/ \partial x) > 0, \ \mathbb{R}(\nu_L) > 0.
\end{align}

In special cases this equation changes to
\begin{align} \label{whole0M}
\fp{}{\tau} C(x,V_R,V_L,\tau) &= - \sqrt{V_L} \lambda_L  \left\{\log \left(\nu_L + \dfrac{\partial}{\partial x} \right)  - \log(\nu_L) - \log \left(\dfrac{\nu_L+1}{\nu_L}\right)\dfrac{\partial}{\partial x}  \right\} \\
& \alpha_L = 0, \ \mathbb{R}(\nu_L + \partial/ \partial x) > 0, \ \mathbb{R}(\nu_L) > 0, \nn
\end{align}

\ni and
\begin{align} \label{whole1M}
\fp{}{\tau} & C(x,V_R,V_L,\tau) = \sqrt{V_L} \lambda_L \Big\{ - \nu_L \log \nu_L \\
&+ \left[\nu_L \log \nu_L - (\nu_L+1) \log (\nu_L+1)\right]\fp{}{x} + (\nu_L+\fp{}{x})\log \left(\nu_L+\fp{}{x}\right)  \Big\} C(x,V_R,V_L,\tau) \nn \\
& \alpha_R = 1, \ \mathbb{R}(\partial/ \partial x) < 0, \ \mathbb{R}(\nu_L) > 0, \nn\end{align}

\ni where logarithm of the differential operator is defined in a sense of (\cite{logOfDif}).
\end{proposition}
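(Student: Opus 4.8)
The plan is to mirror the proof of Proposition~\ref{p6}, the only structural change being that the integration in~(\ref{origM}) runs over the negative half-line. First I would invoke the shift-operator representation~(\ref{shift}), so that $\exp(y\,\partial_x)C(x,\tau)=C(x+y,\tau)$ and~(\ref{origM}) reads $\fp{}{\tau}C=\mathcal{B}_2 C$ with
\[
\mathcal{B}_2 \equiv \sqrt{V_L}\,\lambda_L \int_{-\infty}^{0}\left[\exp\!\left(y\fp{}{x}\right)-1-(e^{y}-1)\fp{}{x}\right]\dfrac{e^{-\nu_L|y|}}{|y|^{1+\alpha_L}}\,dy.
\]
The substitution $w=-y$ carries the domain to $(0,\infty)$, turns $|y|$ into $w$ and $e^{-\nu_L|y|}$ into $e^{-\nu_L w}$, and yields a single integral over $(0,\infty)$ of exactly the type handled in Proposition~\ref{p6}.

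I would then treat $D\equiv\partial/\partial x$ as a scalar parameter and split the integrand into three pieces, each of the standard form $\int_0^{\infty}w^{-1-\alpha_L}e^{-aw}\,dw=\Gamma(-\alpha_L)\,a^{\alpha_L}$ with $a\in\{\nu_L+D,\ \nu_L,\ \nu_L+1\}$. Collecting the contributions of the jump term, the compensating constant, and the compensating drift reproduces verbatim the bracketed operator of~(\ref{wholeM}); its leading piece $\sqrt{V_L}\lambda_L\Gamma(-\alpha_L)(\nu_L+D)^{\alpha_L}$ is precisely the operator $\sqrt{V_L}\,\mathcal{A}_2$ of~(\ref{intRes}) established in Proof~2 of Proposition~\ref{p5}. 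The two subtracted terms $-1-(e^{y}-1)\partial_x$ are exactly what makes the combined integrand $O(w^{2})$ as $w\to0$ (the $O(1)$ and $O(w)$ contributions cancel), so the aggregate integral converges for all $\alpha_L<2$ even though the individual pieces converge only for $\alpha_L<0$; this analytic continuation is the content of the stated condition $\mathbb{R}(\alpha_L)<2$.

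For the two degenerate values I would pass to the limit in $\alpha_L$, exploiting that $\Gamma(-\alpha_L)$ has simple poles at $\alpha_L=0$ and $\alpha_L=1$ while the bracket in~(\ref{wholeM}) vanishes there. At $\alpha_L=0$ one has $\Gamma(-\alpha_L)\sim-1/\alpha_L$ and $a^{\alpha_L}=1+\alpha_L\log a+O(\alpha_L^{2})$, so the product converges to the logarithmic operator of~(\ref{whole0M}); at $\alpha_L=1$ one has $\Gamma(-\alpha_L)\sim1/(\alpha_L-1)$ and a first-order Taylor expansion of the bracket (equivalently, l'H\^opital's rule in $\alpha_L$) produces the operator $(\nu_L+\partial_x)\log(\nu_L+\partial_x)$ of~(\ref{whole1M}). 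The main obstacle is not the algebra but the justification of the two formal moves: giving rigorous meaning to $(\nu_L+\partial/\partial x)^{\alpha_L}$ and its logarithmic limits as Fourier/Riemann--Liouville multipliers, verifying that the interchange of $D$ with the $dy$-integration is legitimate on the relevant function class where $\mathbb{R}(\nu_L+\partial_x)>0$, and ensuring that the limit of the operator identity in $\alpha_L$ is the operator identity of the limit. Once these are granted, the equivalence of~(\ref{origM}) and~(\ref{wholeM}), together with its specializations~(\ref{whole0M}) and~(\ref{whole1M}), follows.
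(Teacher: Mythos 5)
Your proposal follows the same route as the paper: the paper proves Proposition~\ref{p7} by declaring it ``similar to Proposition~\ref{p6}'', whose proof is exactly your argument --- rewrite the integral with the shift operator $\exp(y\,\partial_x)$, treat $\partial/\partial x$ as a scalar parameter, and integrate termwise via $\int_0^\infty w^{-1-\alpha_L}e^{-aw}\,dw=\Gamma(-\alpha_L)a^{\alpha_L}$, with the special cases $\alpha_L=0,1$ obtained by the same formal integration (your pole-cancellation limit of $\Gamma(-\alpha_L)$ times the vanishing bracket is a correct and somewhat more explicit way of getting the logarithmic operators the paper simply asserts). Your additional remarks on the $O(w^2)$ cancellation at the origin and on the analytic-continuation meaning of $\mathbb{R}(\alpha_L)<2$ go beyond what the paper records but do not change the method.
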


\begin{proof}[{\bf Proof}]
The proof is similar to that given in the Proposition~\ref{p6}.
\end{proof}

Again at $\alpha_L = 0$ we can move out the last term in the Eq.~(\ref{orig}) from the integral to the diffusion part of the Eq.~(\ref{pide_init}) because the remaining kernel converges at $y=0$. If we do so, at this special case the integrated equation transforms to
\begin{align} \label{whole0M-1}
\fp{}{\tau} C(x,V_R,V_L,\tau) &= - \sqrt{V_L} \lambda_L  \left\{\log \left(\nu_L + \dfrac{\partial}{\partial x} \right)  - \log(\nu_L) \right\} \\
& \alpha_L = 0, \ \mathbb{R}(\nu_L + \partial/ \partial x) > 0, \ \mathbb{R}(\nu_L) > 0, \nn
\end{align}

We will use this form later when elaborating a numerical method to solve this equation.

It is important to underline that the integration in the Proposition \ref{p6} for positive jumps could be done if $\mathbb{R}(\nu_R) > 1$ while in the Proposition \ref{p7} for negative jumps - if $\mathbb{R}(\nu_L) > 0$. In the special cases $\alpha_R = 1$ this limit could be extended to $\mathbb{R}(\nu_R) > 0$, however it gives rise to a complex values of the coefficients in the rhs of the Eq.~(\ref{whole1M}). Therefore, we keep the above constraint $\mathbb{R}(\nu_R) > 1$ unchanged in this case as well.

Similar representations were obtained first in \cite{BL2002} and later in \cite{Cartea2007} using a characteristic function approach. For instance, the latter authors considered several L\'evy processes with known characteristic function, namely LS, CGMY or KoBoL. Then using Fourier transform they managed to convert the governing PIDE (same type as the Eq.~(\ref{pide_init}) but for the Black-Scholes model with jumps) to a fractional PDE. In their notation our operator $\mathcal{A}_1$ is represented as
\begin{equation}\label{CarteaA1}
    \mathcal{A}_1 \propto (-1)^{\alpha_R} e^{\nu_R} \ _x\mathbb{D}_\infty^{\alpha_R} \left( e^{-\nu_R} C(x,t) \right),
\end{equation}

and operator $\mathcal{A}_2$ as
\begin{equation}\label{CarteaA2}
    \mathcal{A}_2 \propto e^{\nu_L} \ _\infty\mathbb{D}_x^{\alpha_L} \left( e^{-\nu_L} C(x,t) \right),
\end{equation}

So to compare we have to note that aside of the different method of how to derive these equations our main contribution in this paper is:
\begin{enumerate}
\item Special cases $\alpha_r=0,1, \alpha_l=0,1$ are not considered in \cite{Cartea2007}. In \cite{BL2002} a corresponding characteristic function of the KoBoL process was obtained in all cases for $\alpha \le 1$. However, the authors did not consider numerical solution of the fractional PDE. In this paper we derive a fractional PDE for all $\alpha < 2$ and propose a numerical method for their solution.

\item We proposed the idea of solving FPDE with real $\alpha_R \le 0, \alpha_L \le 0$ by using interpolation between option prices computed for the closest integer values of $\alpha_R, \alpha_L$. For the latter we first used to transform the fractional equation into a pseudo-parabolic equation. Then for the solution of this PPDE an efficient FD scheme is constructed that results in LU factorization of the band matrix.

\item Also jumps up and down are considered separately so the model in use (SSM) is slightly different from the model considered in \cite{Cartea2007}.

\item In \cite{Cartea2007} a Crank-Nicolson type numerical scheme was proposed to solve the obtained FPDE in time while discretization in space was done using the Grunwald-Letnikov  approximation which is of the first order in space. Here for fractional equations with $2 > \alpha_R > 0, 2 > \alpha_L > 0$ we obtain the solution using our new scheme which preserves the second order approximation in time and space.

\item As it is known from recent papers (\cite{AbuSaman2007, MeerschaertTadjeran2004, Tadjeran2006, MeerschaertTadjeran2006, Sousa2008}), a standard Grunwald-Letnikov approximation leads to unconditionally unstable schemes. To improve this a shifted Grunwald-Letnikov approximation was proposed which allows construction of the unconditionally stable scheme of the first order in space. \footnote{A second order approximation could in principle be constructed as well, however resulting in a massive calculation of the coefficients. That probably stopped the scientists to further elaborate this approach.} Here we use a different approach to derive the unconditionally stable scheme of higher order.

\item We show that when considering jumps with finite activity and finite variation despite it is a common practice to integrate out all L\'evy compensators in the  Eq.~(\ref{pide_init}) in the integral terms  this breaks the stability of the scheme at least for the fractional PDE. Therefore, in order to construct the unconditionally stable scheme one must keep some other terms under the integrals. To resolve this in Cartea (2007) the authors were compelled to change their definition of the fractional derivative (see below).

\item Our approach could be easily generalized for a time-dependent L\'evy density.
\end{enumerate}

\section{Numerical method} \label{Snum}
Let us consider a general case which is given by the Eq.~(\ref{whole}) and Eq.~(\ref{wholeM}) \footnote{In principal one can eliminate special cases when one of the following conditions is valid $\alpha_R =0, \alpha_L = 0, \alpha_R = 1, \alpha_L = 1$, by just substituting, say $\alpha_R = \epsilon << 1$ instead of $\alpha_R = 0$, $\alpha_R = 1 + \epsilon$ instead of $\alpha_R = 1$ etc}. We first discuss how to construct an unconditionally stable scheme of the second order in space and second or higher order in time. Then we consider some peculiarities of implementation of the derived finite difference schemes.

\subsection{Case $\alpha_R = 0$ or $\alpha_L = 0$.} This extreme case corresponds to the familiar Variance Gamma model.
In this case the integrals in the Eq.~(\ref{orig}) and Eq.~(\ref{origM}) exist if we keep just first two terms under the integral. Therefore we could integrate out the last term  $R \propto \fp{}{x} C(x,\tau) (e^y-1)$. This term then will become a part of the convection part of the total PIDE and therefore we will not consider it here, assuming that we use a splitting technique and know how to solve the remaining convection-diffusion equation.

Then the Eq.~(\ref{whole}) could be written in the form of the Eq.~(\ref{operEq11}) with
\begin{eqnarray} \label{Balpha0}
    \mathcal{B}_R &=&  \sqrt{V_R}\lambda_R  \left\{ \log(\nu_R) - \log \left(\nu_R - \dfrac{\partial}{\partial x} \right)\right\} \\
    \mathcal{B}_L &=&  \sqrt{V_L}\lambda_L  \left\{ \log(\nu_L)  - \log \left(\nu_L+ \dfrac{\partial}{\partial x} \right) \right\} \nn
\end{eqnarray}

Therefore, integrating it we obtain an explicit form of the Eq.~(\ref{solOper2})
\begin{eqnarray}\label{solAlpha0}
C^{k+1}(x) &=& \left(1 - \dfrac{1}{\nu_R}\dfrac{\partial}{\partial x} \right)^{-m}   C^k(x), \quad m   = \sqrt{V_R}\lambda_R \theta > 0,  \\
C^{k+1}(x) &=& \left(1 + \dfrac{1}{\nu_L}\dfrac{\partial}{\partial x} \right)^{-m} C^k(x), \quad m   = \sqrt{V_L}\lambda_L \theta,  \nn
\end{eqnarray}

In practical computation of the rhs operators we exploit a modification of our interpolation method which was described above. First, note that typical values of $\lambda_R, \lambda_L$ as well as $V_R, V_L$ are limited, i.e. normally $\lambda_R < M, \lambda_L < M, V_R < M, V_L < M$ where M could be chosen in the range, say 3-5. Second, if we solve a general jump-diffusion equation using some kind of splitting methods, the time step of integration $\theta$ in the Eq.~(\ref{solAlpha0}) is determined by the time step used at the integration of the diffusion part. This means that $\theta$ is usually small. Therefore, it is pretty reasonable to assume that in the Eq.~(\ref{solAlpha0}) $m < 2$. Next, as follows from the definition of the fractional derivatives, the operators in the Eq.~(\ref{solAlpha0}) are continuous in $m$. Therefore, we could solve the Eq.~(\ref{solAlpha0}) for $m=0,1,2$ and then use quadratic interpolation to get the solution given the real value of $m$, and the condition $m < 2$. Note, that $m=0$ is a trivial case so the solution $C^{k+1}(X) = C^k(x)$ is already known.

Note a choice of $m=-1$. On the one hand this is very attractive because then the solution of the Eq.~(\ref{solAlpha0}) is already found. On the other hand at $m < 0$ the scheme in the Eq.~(\ref{solAlpha0}) becomes explicit which breaks its unconditional stability. Apparently the best one can achieve in this case is to use a central difference approximation for the first derivative. Then it is possible to show that all eigenvalues of the rhs matrix have their real value equal to one. Thus the stability of the scheme is questionable.

We now construct a stable FD scheme to solve the first equation in the Eq.~(\ref{solAlpha0}). Similar to what was already discussed in the previous section a forward second order approximation of the first derivative has to be chosen. Then the eigenvalues of the discrete operator $\left(1 - \dfrac{1}{\nu_R}\dfrac{\Delta}{\Delta x} \right)^{-m}$ are
\begin{equation}\label{alpha0lambda}
    \zeta = \left(1 + \dfrac{3}{2h \nu_R}\right)^{-m}.
\end{equation}

We need to guarantee that $\| \left(1 - \dfrac{1}{\nu_R}\dfrac{\Delta}{\Delta x} \right)^{-m} \| < 1$. Thus, if $\nu_R < 1$ this FD scheme is stable at $h < 3/[2(1-\nu_R)]$, and if $\nu_R \ge 1$ - it is unconditionally stable. As follows from the Proposition Eq.~(\ref{p6}) $\mathbb{R}(\nu_R) > 1$, therefore the scheme is unconditionally stable.

After this discretization the matrix of the lhs operator becomes one-sided tridiagonal if $m=1$, and one-sided pentadiagonal if $m=2$. Therefore this equation can be efficiently solved with the total complexity $O(N(2m+1))$.

To preserve monotonicity of the solution for the second equation in the Eq.~(\ref{solAlpha0}) a backward second order approximation of the first derivative has to be chosen. This approximation was also already introduced in the previous section. Then $C^{k+1}(x,m)$ can be computed as a product $A_m \cdot C^k(x)$, where $A_m$ is a band matrix with $2m+1$ diagonals. So the complexity of this is also $O(N(2m+1)$.

Based on these results we extend our numerical test described in the previous section to the case $\alpha_R = 0$. However, to preserve convergence of the integral now instead of the Eq.~(\ref{numEq}) we have to use the extended equation
\begin{equation} \label{numEq2}
\fp{}{\tau} C(x,\tau) = \int_0^{\infty} \left[C(x+y,\tau) - C(x,\tau)\right]\lambda_R \dfrac{e^{-\nu_R |y|}}{|y|^{1+\alpha_R}} dy,
\quad \alpha_R < -1
\end{equation}

We again compare the FFT solution of the Eq.~(\ref{numEq2}) with that obtained based on our method.

\paragraph{FFT.} It should be underlined that the presented simple FFT algorithm completely loses its accuracy when $\alpha_R \rightarrow 0$. Therefore, instead of $\alpha_R = 0$ we will chose real $\alpha_R= -0.5$. We again define a uniform grid in the domain $(-x_*,x_*)$ which contains $N$ points: $x_1 = -x_*, x_2,...x_{N-1}, x_N = x_*$ such that $x_i - x_{i-1} = h, i=2...N$. We then approximate the integral in the rhs of the Eq.~(\ref{numEq2}) with the first order of accuracy in $h$ as
\begin{align} \label{numEq2Appr}
\int_0^\infty & \left[C(x+y,\tau) - C(x,\tau)\right]\lambda_R \dfrac{e^{-\nu_R |y|}}{|y|^{1+\alpha_R}} dy =
h \sum_{j=1-i}^{N-i} C_{i+j}(\tau) f_j - C(x,\tau) \lambda_R \nu_R^{\alpha_R} \Gamma(-\alpha_R), \nn \\
f_j &\equiv \lambda_R \dfrac{e^{-\nu_R |x_j|}}{|x_j|^{1+\alpha_R}} + O(h^2)
\end{align}

The matrix-vector product in the lhs of the Eq.~(\ref{numEq2Appr}) is computed using FFT as it was described in the previous section.

\paragraph{FD.} We solve the Eq.~(\ref{numEq2}) using interpolation in $\alpha_R$ between the points $\alpha_R = 0,-1,-2,-3$. At $\alpha_R = 0$ we use the FD scheme in the Eq.~(\ref{solAlpha0}). At $\alpha_R < 0$ we again use our approach of construction of the pseudo-parabolic equations (see propositions \ref{p3}, \ref{p4}), and instead of the Eq.~(\ref{operEq11}) now obtain
\begin{equation}\label{operEq111}
   \fp{}{\tau}C(x,t) =  \mathcal{B} C(x,t), \quad \mathcal{B} \equiv \dfrac{1}{2} (\mathcal{A}^-_x)^{-1} - \lambda_R \nu_R^{\alpha_R}\Gamma(-\alpha_R).
\end{equation}

Further we use the Crank-Nicolson scheme Eq.~(\ref{discrete}) which now reads
\begin{equation}\label{discrete2terms}
\left( \left[1 + \dfrac{1}{2}\lambda_R \nu_R^{\alpha_R}\Gamma(-\alpha_R)\theta\right]\mathcal{A}^-_x - \dfrac{1}{4} \theta\right) C^{k+1}(x) =  \left(\left[1 - \dfrac{1}{2}\lambda_R \nu_R^{\alpha_R}\Gamma(-\alpha_R)\theta\right]\mathcal{A}^-_x + \dfrac{1}{4} \theta\right) C^k(x).
\end{equation}

The stability analysis could be provided similar to what we did in the previous sections. Again it is easy to show that the forward one-sided approximation of the operator $\mathcal{A}^-_x$ given in the Eq.~(\ref{forward1der}) guarantees the unconditional stability of the above scheme.

\begin{figure}[t!]
\vspace{0.1 in}
\hspace{0.5 in}
\centering
\fbox{\includegraphics[totalheight=3.5in]{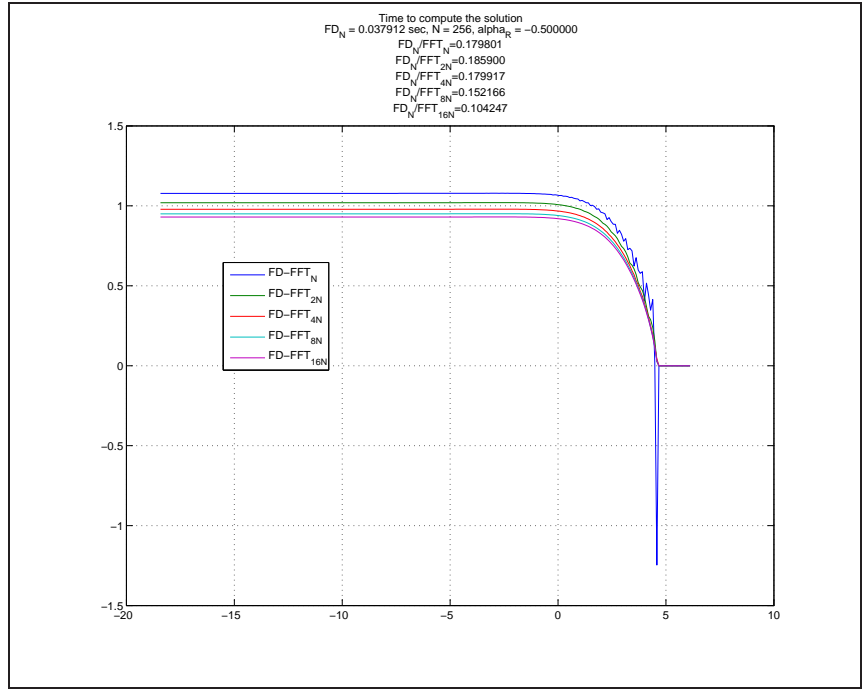}}
\caption{Difference (FD-FFT) in solutions of the Eq.~(\ref{numEq1}) obtained using our finite-difference method (FD) and an explicit Euler scheme in time where the jump integral is computed using FFT. $\alpha_R = -0.5$.}
\label{Fig0}
\end{figure}

\paragraph{Comparison.} The results of this test are given in Fig.~\ref{Fig0}. This could be compared with the results presented in Fig.~\ref{Fig0_5}. The difference is that now instead of extrapolation we use interpolation, because we are able to solve our test problem numerically at $\alpha_R = 0$. Surprisingly the difference in the FFT and FD solutions slightly increases in case of interpolation. The FD solution is still faster than the FFT, and as follows from the above analysis - more accurate.

\subsection{Case $\alpha_R = 1, \alpha_L = 1$.}

This is a case of jumps with infinite variation and infinite activity. Therefore we have to keep the whole integrals in the Eq.~(\ref{orig}) and Eq.~(\ref{origM}), i.e. in each integral we can not integrate the last term out because otherwise the integral does not converge.

Let us remind that as follows from the Proposition \ref{p6} in this  case the original PIDE Eq.~(\ref{orig}) is equivalent to the PIDE
\begin{align} \label{whole1-1}
\fp{}{\tau} & C(x,V_R,V_L,\tau) = \sqrt{V_R}\lambda_R \Big\{ - \nu_R \log \nu_R + (\nu_R- \fp{}{x})\log \left(\nu_R-\fp{}{x}\right) \\
&+ \left[\nu_R \log \nu_R - (\nu_R-1) \log (\nu_R-1)\right]\fp{}{x}   \Big\} C(x,V_R,V_L,\tau) \nn \\
& \mathbb{R}(\partial/ \partial x) < 0, \mathbb{R}(\nu_R) > 1, \nn
\end{align}

\ni while from Proposition \ref{p7} the PIDE Eq.~(\ref{whole1M}) is equivalent to the PIDE
\begin{align} \label{whole1M-1}
\fp{}{\tau} & C(x,V_R,V_L,\tau) = \sqrt{V_L} \lambda_L \Big\{ - \nu_L \log \nu_L \\
&+ \left[\nu_L \log \nu_L - (\nu_L+1) \log (\nu_L+1)\right]\fp{}{x} + (\nu_L+\fp{}{x})\log \left(\nu_L+\fp{}{x}\right)  \Big\} C(x,V_R,V_L,\tau) \nn \\
& \mathbb{R}(\partial/ \partial x) < 0, \ \mathbb{R}(\nu_L) > 0. \nn
\end{align}

For the following we need to prove the following Proposition.

\begin{proposition} \label{p8}
The following identity holds
\begin{align}
- \nu_R \log \nu_R &+ (\nu_R- \fp{}{x})\log \left(\nu_R-\fp{}{x}\right) + \left[\nu_R \log \nu_R - (\nu_R-1) \log (\nu_R-1)\right]\fp{}{x}   \nn \\
&= \int_\nu^\infty \Biggl\{ \log \nu_R - \log \left( \nu_R - \fp{}{x}\right) + \left(\log \dfrac{\nu_R-1}{\nu_R}\right)\fp{}{x} \Biggr\} d \nu
\end{align}
\end{proposition}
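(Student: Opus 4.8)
The plan is to strip the operator identity down to a scalar calculus identity by treating $\fp{}{x}$ as a parameter, precisely the device already licensed in the proofs of Propositions~\ref{p6} and~\ref{p7} (where the same functional calculus for $(\nu_R-\fp{}{x})$ and its logarithm is invoked). Writing $s$ for $\fp{}{x}$, and reading the right-hand side in the only consistent way -- the dummy of integration being the lower endpoint and the integrand being the $\alpha_R=0$ symbol of Eq.~(\ref{whole0}) with $\nu_R$ replaced by the integration variable $\nu$ -- the task reduces to verifying
\be
F(\nu_R) = \int_{\nu_R}^{\infty} \left[ \log\nu - \log(\nu - s) + s\,\log\frac{\nu-1}{\nu} \right] d\nu,
\ee
where $F(\nu_R) \equiv -\nu_R\log\nu_R + (\nu_R-s)\log(\nu_R-s) + s\bigl[\nu_R\log\nu_R - (\nu_R-1)\log(\nu_R-1)\bigr]$ is the symbol of the $\alpha_R=1$ operator in Eq.~(\ref{whole1-1}).

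First I would compare the two sides through their $\nu_R$-derivatives. Differentiating the integral with respect to its lower limit gives, by the fundamental theorem of calculus, minus the integrand at $\nu=\nu_R$, namely $-\log\nu_R + \log(\nu_R-s) - s\log\frac{\nu_R-1}{\nu_R}$. Differentiating $F$ term by term with the elementary rule $\frac{d}{dw}[w\log w]=\log w + 1$ applied to each of its four pieces, the four additive constants cancel in pairs ($-1+1+s-s=0$) and the surviving logarithmic terms collapse to exactly the same expression. Thus $F$ and the integral share a derivative throughout the admissible region $\mathbb{R}(\nu_R)>1$, $\mathbb{R}(s)<0$, which is exactly the regime in which $\nu$, $\nu-1$ and $\nu-s$ remain in the right half-plane over the whole range of integration, so every logarithm is single-valued and the operator interpretation is unambiguous.

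It then remains only to pin the constant of integration by matching one value, and $\nu_R\to\infty$ is the natural choice. On the integral side the limit is $0$ once I check that the integrand is $O(\nu^{-2})$ at infinity: the $s/\nu$ contributions from $-\log(1-s/\nu)$ and from $s\log(1-1/\nu)$ cancel, leaving an absolutely convergent tail so the improper integral is well defined and vanishes as its lower limit recedes. The hard part will be showing $F(\nu_R)\to 0$, since each term of $F$ individually grows like $\nu_R\log\nu_R$; the argument requires expanding $(\nu_R-s)\log(\nu_R-s)$ and $(\nu_R-1)\log(\nu_R-1)$ to order $O(1/\nu_R)$ and watching the $\nu_R\log\nu_R$, the $\log\nu_R$, and the constant contributions cancel in turn, leaving $F(\nu_R)=O(1/\nu_R)\to 0$. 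With matching derivatives and matching limits at infinity the two sides coincide identically; restoring $s\mapsto\fp{}{x}$ then yields the operator identity and completes the proof.
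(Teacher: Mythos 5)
Your proof is correct, but it takes a genuinely different route from the paper's. The paper's own argument is a one-line reduction to the kernel level: it observes that $\int_{\nu_R}^{\infty} e^{-\nu|y|}|y|^{-(1+\alpha)}\,d\nu = e^{-\nu_R|y|}|y|^{-(2+\alpha)}$, so the $\alpha_R=1$ L\'evy kernel is the integral over the tempering parameter of the $\alpha_R=0$ kernel, and then invokes Proposition~\ref{p6} with $\alpha_R=0$ under the integral sign (and implicitly with $\alpha_R=1$ to identify the left-hand side), i.e.\ both sides are recognized as the same PIDE operator. You instead verify the scalar symbol identity directly: differentiate both sides in $\nu_R$, check via $\frac{d}{dw}[w\log w]=\log w+1$ that the derivatives agree (your constant cancellation $-1+1+s-s=0$ is right), and fix the integration constant by the limit $\nu_R\to\infty$. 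What each buys: the paper's route is shorter and explains \emph{why} the identity holds (integrating out the tempering parameter raises $\alpha$ by one), but it silently relies on an interchange of the $\nu$- and $y$-integrations and on Proposition~\ref{p6} at two parameter values; your route is self-contained and elementary, and it supplies exactly what the paper omits --- the $O(\nu^{-2})$ decay that makes the improper integral converge, and the nontrivial cancellation of the individually divergent $\nu_R\log\nu_R$ terms in $F(\nu_R)$ as $\nu_R\to\infty$. Both arguments share the paper's (heuristic) convention of treating $\fp{}{x}$ as a scalar parameter $s$ with $\mathbb{R}(s)<0$, $\mathbb{R}(\nu_R)>1$, under which every logarithm stays on the principal branch, so your proof is at the same level of rigor as Propositions~\ref{p6} and~\ref{p7} themselves.
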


\begin{proof}[{\bf Proof}]
To prove this we one have to note that
\begin{equation}\label{intAlpha1}
  \int^\infty_\nu \dfrac{e^{-\nu_R |y|}}{|y|^{1+\alpha_R}} d\nu = \dfrac{e^{-\nu_R |y|}}{|y|^{2+\alpha_R}},
\end{equation}

\ni and then use Proposition~\ref{p6} with $\alpha_R=0$.
\end{proof}

In a similar way we can prove the following proposition
\begin{proposition} \label{p9}
\begin{align}
- \nu_L \log \nu_L &+ \left[\nu_L \log \nu_L - (\nu_L+1) \log (\nu_L+1)\right]\fp{}{x} + (\nu_L+\fp{}{x})\log \left(\nu_L+\fp{}{x}\right)   \nn \\
&= \int_\nu^\infty \Biggl\{ \log(\nu_L) - \log \left(\nu_L + \dfrac{\partial}{\partial x} \right)  + \log \left(\dfrac{\nu_L+1}{\nu_L}\right)\dfrac{\partial}{\partial x}  \Biggr\} d \nu
\end{align}
\end{proposition}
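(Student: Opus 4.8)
The plan is to mirror the proof of Proposition~\ref{p8}, but working on the negative half-line and keeping track of the fact that the right-hand side can only be integrated as a whole. First I would recall from the proof of Proposition~\ref{p7} (built, as in Proposition~\ref{p6}, on the shift operator of the Eq.~(\ref{shift})) that, treating $\fp{}{x}$ as a parameter, the two operators appearing in the identity arise from one and the same construction
\bd
\Phi_{\alpha}(\nu) \equiv \int_{-\infty}^0 \left[\exp\left(y\fp{}{x}\right) - 1 - (e^y-1)\fp{}{x}\right]\dfrac{e^{-\nu|y|}}{|y|^{1+\alpha}}\,dy,
\ed
with $\Phi_1(\nu_L)$ equal to the left-hand side of the Proposition (this is exactly the operator in the Eq.~(\ref{whole1M}) stripped of the $\sqrt{V_L}\lambda_L$ prefactor) and $\Phi_0(\nu)$ equal to the integrand on the right-hand side (the operator in the Eq.~(\ref{whole0M})).

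Next I would substitute the key integral identity, the negative-jump analogue of the Eq.~(\ref{intAlpha1}) taken at $\alpha=0$,
\bd
\int_{\nu_L}^\infty \dfrac{e^{-\nu|y|}}{|y|}\,d\nu = \dfrac{e^{-\nu_L|y|}}{|y|^{2}},
\ed
which raises the exponent of $|y|$ by one and hence carries the $\alpha=0$ kernel into the $\alpha=1$ kernel. Writing the right-hand side of the Proposition as $\int_{\nu_L}^\infty \Phi_0(\nu)\,d\nu$ and interchanging the $\nu$-integration with the $y$-integration defining $\Phi_0$, I would pull the $\nu$-integral onto the kernel alone, apply the identity above, and recognise the result as $\Phi_1(\nu_L)$, i.e. the left-hand side:
\bd
\int_{\nu_L}^\infty \Phi_0(\nu)\,d\nu = \int_{-\infty}^0 \left[\exp\left(y\fp{}{x}\right) - 1 - (e^y-1)\fp{}{x}\right]\left(\int_{\nu_L}^\infty \dfrac{e^{-\nu|y|}}{|y|}\,d\nu\right)dy = \Phi_1(\nu_L).
\ed

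The step I expect to be the main obstacle is the justification of this interchange together with the accompanying convergence bookkeeping. Note that the identity cannot be established term by term: the individual pieces $\int_{\nu_L}^\infty \log\nu\,d\nu$ and $\int_{\nu_L}^\infty \log(\nu+\fp{}{x})\,d\nu$ each diverge, and it is only the combination $\log\nu - \log(\nu+\fp{}{x}) + \log\frac{\nu+1}{\nu}\fp{}{x}$ --- which decays like $O(\nu^{-2})$ as $\nu\to\infty$ once $\fp{}{x}$ is read as an eigenvalue --- that is integrable. Passing to the $y$-representation of $\Phi_0$ circumvents this difficulty, because the compensator $-(e^y-1)\fp{}{x}$ renders the bracket $O(y^2)$ near $y=0$, so the $y$-integral against $e^{-\nu_L|y|}/|y|^{2}$ converges even in the infinite-variation case $\alpha_L=1$; I would then invoke Fubini under the domain restrictions $\mathbb{R}(\fp{}{x})<0$, $\mathbb{R}(\nu_L)>0$ of the Eq.~(\ref{whole1M-1}) to complete the argument, finally appealing to Proposition~\ref{p7} at $\alpha_L=1$ to identify $\Phi_1(\nu_L)$ with the left-hand side.
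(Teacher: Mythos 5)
Your proposal is correct and follows essentially the same route as the paper: the paper proves Proposition~\ref{p8} by noting the kernel identity (its Eq.~(\ref{intAlpha1})) that integration over $\nu$ raises the power of $|y|$ by one and then invoking the $\alpha=0$ case of Proposition~\ref{p6}, and disposes of Proposition~\ref{p9} with the remark that it is proved \emph{in a similar way} --- which is exactly the negative-half-line mirror you carry out. Your added bookkeeping (why the identity cannot be integrated term by term and why the compensated $y$-representation justifies the Fubini interchange) is detail the paper omits entirely, and it is a welcome strengthening rather than a departure.
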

$\Box$

These two identities gives us an idea of how to construct a FD numerical method for solving the Eq.~(\ref{whole1-1}) and Eq.~(\ref{whole1M-1}).
First we rewrite the Eq.~(\ref{whole1-1}) and Eq.~(\ref{whole1M-1}) in the form
\begin{align}\label{newEqAlpha1}
\fp{}{\tau} C(x,V_R,V_L,\tau) &= \mathbb{L}_R C(x,V_R,V_L,\tau) \\
\fp{}{\tau} C(x,V_R,V_L,\tau) &= \mathbb{L}_L C(x,V_R,V_L,\tau) \nn \\
\mathbb{L}_R &\equiv \sqrt{V_R}\lambda_R \int_\nu^\infty \Biggl\{ \log (\nu_R) - \log \left( \nu_R - \fp{}{x}\right) + \left(\log \dfrac{\nu_R-1}{\nu_R}\right)\fp{}{x} \Biggr\} d \nu \nn \\
\mathbb{L}_L &\equiv \sqrt{V_L}\lambda_L \int_\nu^\infty \Biggl\{ \log(\nu_L) - \log \left(\nu_L + \dfrac{\partial}{\partial x} \right) + \log \left(\dfrac{\nu_L+1}{\nu_L}\right)\dfrac{\partial}{\partial x}  \Biggr\} d \nu \nn
\end{align}

We already know how to solve these equations if the operators $\mathbb{L}_R$ and $\mathbb{L}_L$ do not contain the integrals. We want to utilize this approach by proceeding with the following steps.

\paragraph{Step 1.} First we truncate the upper limit in the integral to some $\nu_*$. This could be done because the integral in the Eq.~(\ref{newEqAlpha1}) is well-defined and at $\nu_R \rightarrow \infty$ the integral kernel tends to zero as
\begin{equation}\label{limit}
  \lim_{\nu_R \rightarrow \infty} \mathbb{L}_R C(x,V_R,V_L,\tau) = \sqrt{V_R}\lambda_R \dfrac{1}{2 \nu_R^2} \left(-\fp{}{x} + \sop{}{x}\right) + O(1/\nu_R^3)
\end{equation}

 At the interval $(\nu, \nu_*)$ we approximate the integral in $\nu$ using some quadrature formula, for instance, the well-known Simpson formula (higher-order approximations of even adaptive quadratures could definitely be used as well). So we partition the interval $(\nu, \nu_*)$ into an even number of intervals $M$ all of the same width $h = (\nu_* - \nu)/M$. Then operators in the Eq.~(\ref{newEqAlpha1}) transform to
\begin{align}\label{newEqAlpha2}
\mathbb{L}_R &\equiv  \sum_{i=0}^M \mathbb{L}_{i,R}, \qquad \mathbb{L}_L \equiv  \sum_{i=0}^M \mathbb{L}_{i,L} \\
\mathbb{L}_{i,R} &= a_i \sqrt{V_R}\lambda_R \dfrac{\nu_* - \nu}{3M} \Biggl\{ \log (\nu_{i,R}) - \log \left( \nu_{i,R} - \fp{}{x}\right) + \left(\log \dfrac{\nu_{i,R}-1}{\nu_{i,R}}\right)\fp{}{x} \Biggr\} \nn \\
\mathbb{L}_{i,L} &\equiv a_i \sqrt{V_L}\lambda_L \dfrac{\nu_* - \nu}{3M} \Biggl\{ \log(\nu_{i,L}) -\log \left(\nu_{i,L} + \dfrac{\partial}{\partial x} \right) + \log \left(\dfrac{\nu_{i,L}+1}{\nu_{i,L}}\right)\dfrac{\partial}{\partial x}  \Biggr\},  \nn \\
a_i &= 1, \quad i=0,M, \qquad a_i = 2, \quad i=2,4...M-2, \qquad a_i = 4, \quad i=1,3...M-1. \nn
\end{align}

\paragraph{Step 2.} Each operator in the Eq.~(\ref{newEqAlpha2}) is a sum of $M$ operators which commute with each other. Therefore, the solution of the Eq.~(\ref{newEqAlpha1}) reads
\begin{align} \label{splitting}
  C(x,V_R,V_L,\tau) &= \exp\left[\sum_{i=0}^M \mathbb{L}_{i,R} \tau \right]  C(x,V_R,V_L,0) = \prod_{i=1}^M e^{\mathbb{L}_{i,R} \tau}C(x,V_R,V_L,0) \\
    C(x,V_R,V_L,\tau) &= \exp\left[\sum_{i=0}^M \mathbb{L}_{i,L} \tau \right]  C(x,V_R,V_L,0) = \prod_{i=1}^M e^{\mathbb{L}_{i,L} \tau}C(x,V_R,V_L,0) \nn
\end{align}

Using a splitting technique (see, for instance, \cite{LanserVerwer, Yoshida}) we can represent this equation in the form
\begin{align} \label{split2}
C_1(x,V_R,V_L,\theta) &= e^{\mathbb{L}_{1,R} \tau}C(x,V_R,V_L,0) \\
C_2(x,V_R,V_L,\theta) &= e^{\mathbb{L}_{2,R} \tau}C_1(x,V_R,V_L,\theta) \nn \\
....................&.......................... \nn \\
C_M(x,V_R,V_L,\theta) &= e^{\mathbb{L}_{M,R} \tau}C_{M-1}(x,V_R,V_L,\theta) \nn \\
C(x,V_R,V_L,\theta) &= C_{M}(x,V_R,V_L,\theta) \nn
\end{align}

\ni and similarly for the operator $\mathbb{L}_L$.

\paragraph{Step 3.} Each equation in the Eq.~(\ref{split2}) is very similar to that corresponding to the case $\alpha = 0$ (see the previous section).The only difference is that the operators $\mathbb{L}_{i,R}$ now contain an extra term $\mathbb{L}_{3,i,R} = \left(\log \dfrac{\nu_{i,R}-1}{\nu_{i,R}}\right)\fp{}{x}$, and the operators  $\mathbb{L}_{i,L}$ now contain an extra term $\mathbb{L}_{3,i,L} = \left(\log \dfrac{\nu_{i,L}+1}{\nu_{i,L}}\right)\fp{}{x}$. We can apply splitting to these operators similar to as we did in the above. Further by analogy with what was already discussed in the previous sections devoted to P\'ade approximations, these terms $e^{\mathbb{L}_{3,i,R}}\theta$ and $e^{\mathbb{L}_{3,i,L}}\theta$ could be approximated with the second order of accuracy in $\theta$ by using the Eq.~(\ref{Pade1}). Finally, each equation in the Eq.~(\ref{split2}) reads
\begin{align}\label{finAlpha1}
C_{-1}^{k+1}(x) &= C^{k}(x) \\
C_{i*}^{k+1}(x) &= \dfrac{1 +  \dfrac{m_i}{2}\mathbb{L}_{3,i,R}\theta}{1 -  \dfrac{m_i}{2} \mathbb{L}_{3,i,R}\theta}  C_{i-1}^k(x) \nn \\
C_i^{k+1}(x) &= \left(1 - \dfrac{1}{\nu_{1,R}}\dfrac{\partial}{\partial x} \right)^{-m_i} C_{i*}^k(x), \quad i=0,...,M, \qquad m_i \equiv a_i \sqrt{V_R}\lambda_R \dfrac{\nu_* - \nu}{3M} \theta \nn \\
C^{k+1}(x) &= C_M^{k+1}(x) \nn
\end{align}

We can chose the number $M$ to guarantee that the value of $m_i$ is less than 2 and then use interpolation solving the above equations at $m_i = 0,1,2$.

Similar scheme could be constructed for the operator $\mathbb{L}_{L}$, which reads
\begin{align}\label{finAlpha2}
C_{-1}^{k+1}(x) &= C^{k}(x) \\
C_{i*}^{k+1}(x) &= \dfrac{1 +  \dfrac{m_i}{2}\mathbb{L}_{3,i,L}\theta}{1 -  \dfrac{m_i}{2} \mathbb{L}_{3,i,L}\theta}    C_{i-1}^{k+1}(x) \nn \\
C_i^{k+1}(x) &= \left(1 + \dfrac{1}{\nu_{i,L}}\dfrac{\partial}{\partial x} \right)^{-m_i} C_{i*}^{k+1}(x), \quad i=0,...,M, \qquad m_i \equiv a_i \sqrt{V_L}\lambda_L \dfrac{\nu_* - \nu}{3M} \theta \nn \\
C^{k+1}(x) &= C_M^{k+1}(x) \nn
\end{align}

\paragraph{Step 4.} To construct an unconditionally stable scheme in $x$ we have to chose approximation for the first derivative in the Eq.~(\ref{finAlpha1}). If we rewrite this equation in the form
\begin{align}\label{finAlpha3}
C_{-1}^{k+1}(x) &= C^{k}(x) \\
\left[1 - \dfrac{m_i \theta}{2}\left(\log \dfrac{\nu_{i,R}-1}{\nu_{i,R}}\right)\fp{}{x}\right] C_{i*}^{k+1}(x) &= \left[1 + \dfrac{m_i \theta}{2}\left(\log \dfrac{\nu_{i,R}-1}{\nu_{i,R}}\right)\fp{}{x}\right]  C^k_{i-1}(x),  \quad i=0,...,M \nn \\
\left(1 - \dfrac{1}{\nu_{i,R}}\dfrac{\partial}{\partial x} \right)^{m_i}C^k_{i}(x) &= C^k_{i*}(x) \nn \\
C^{k+1}(x) &= C_M^{k+1}(x) \nn
\end{align}

\ni it becomes obvious that the derivative in the second equation in the Eq.~(\ref{finAlpha3}) should be approximated by using a backward one-sided second order divided difference. For the derivative in the third equation one has to use a forward approximation.

Similarly we rewrite the Eq.~(\ref{finAlpha2}) in the form
\begin{align}\label{finAlpha4}
C_{-1}^{k+1}(x) &= C^{k}(x) \\
\left[1 - \dfrac{m_i \theta}{2}\left(\log \dfrac{\nu_{i,L}+1}{\nu_{i,L}}\right)\fp{}{x}\right]
C_{i*}^{k+1}(x) &= \left[1 + \dfrac{m_i \theta}{2}\left(\log \dfrac{\nu_{i,L}+1}{\nu_{i,L}}\right)\fp{}{x}\right]  C^k_{i-1}(x),  \quad i=0,...,M \nn \\
\left(1 + \dfrac{1}{\nu_{i,L}}\dfrac{\partial}{\partial x} \right)^{m_i}C_i^{k+1}(x) &= C^k_{i*}(x)\nn \\
C^{k+1}(x) &= C_M^{k+1}(x) \nn
\end{align}

\ni and use a forward approximation for the derivative in the second equation in the Eq.~(\ref{finAlpha4}) and the backward approximation in the third equation.

The matrix in the rhs of the second equation in the Eq.~(\ref{finAlpha4}) is upper tridiagonal. The matrix in the rhs of the third equation in the Eq.~(\ref{finAlpha4}) is lower tridiagonal at $m_i=1$ and lower pentadiagonal at $m_i=2$. The total complexity of the algorithm as compared with the case $\alpha=0$ is: one extra equation at each step, $M$ steps instead of just one in the case $\alpha=0$. Therefore, using the results given in Fig.~\ref{Fig0} we can expect that at $M = 30$ this algorithm is about 3 times slower than the FFT. On the other hand it provides the second order approximation in both space and time, and does not require to re-interpolate the FFT results to the FD grid which was previously used to find solution for the diffusion part of the original PIDE.

To verify this we provided two numerical experiments. In the first experiment $\nu_*$ varied while $h = (\nu_* - \nu_R)/M$ was chosen to be constant. At $\nu_* = 5$ we chose $M=30$. The other parameters are same as in the previous numerical experiments reported in the above. This results are presented in Fig.~\ref{varNuStar}.
\begin{figure}[t!]
\vspace{0.1 in}
\hspace{0.5 in}
\centering
\fbox{\includegraphics[totalheight=3.5in]{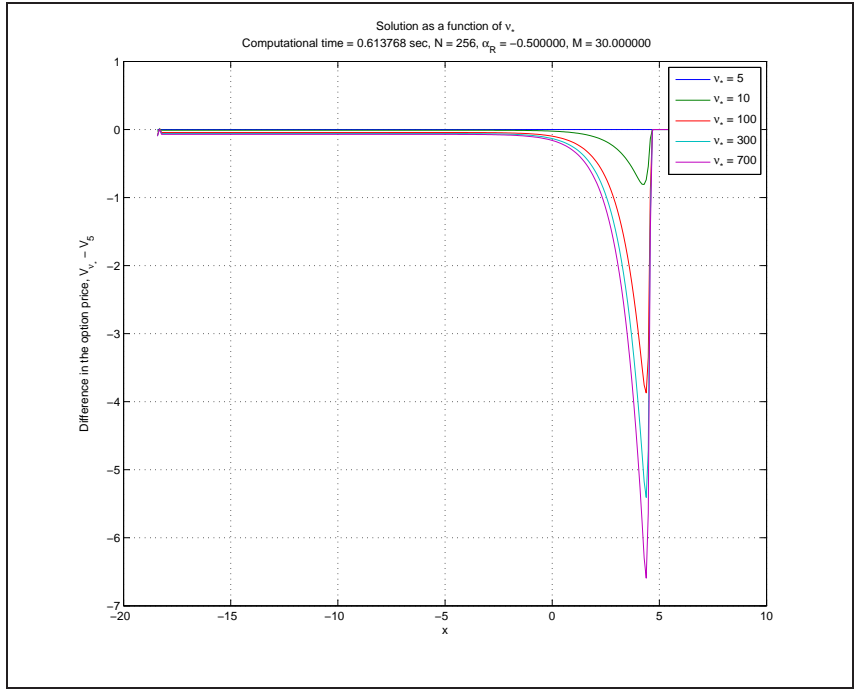}}
\caption{Difference in solutions of the Eq.~(\ref{finAlpha3}) obtained at various $\nu_*$ and that at $\nu_* = 5$ at $M=30$ and $\alpha_R = 1$.}
\label{varNuStar}
\end{figure}

\begin{figure}[t!]
\vspace{0.1 in}
\hspace{0.5 in}
\centering
\fbox{\includegraphics[totalheight=3.5in]{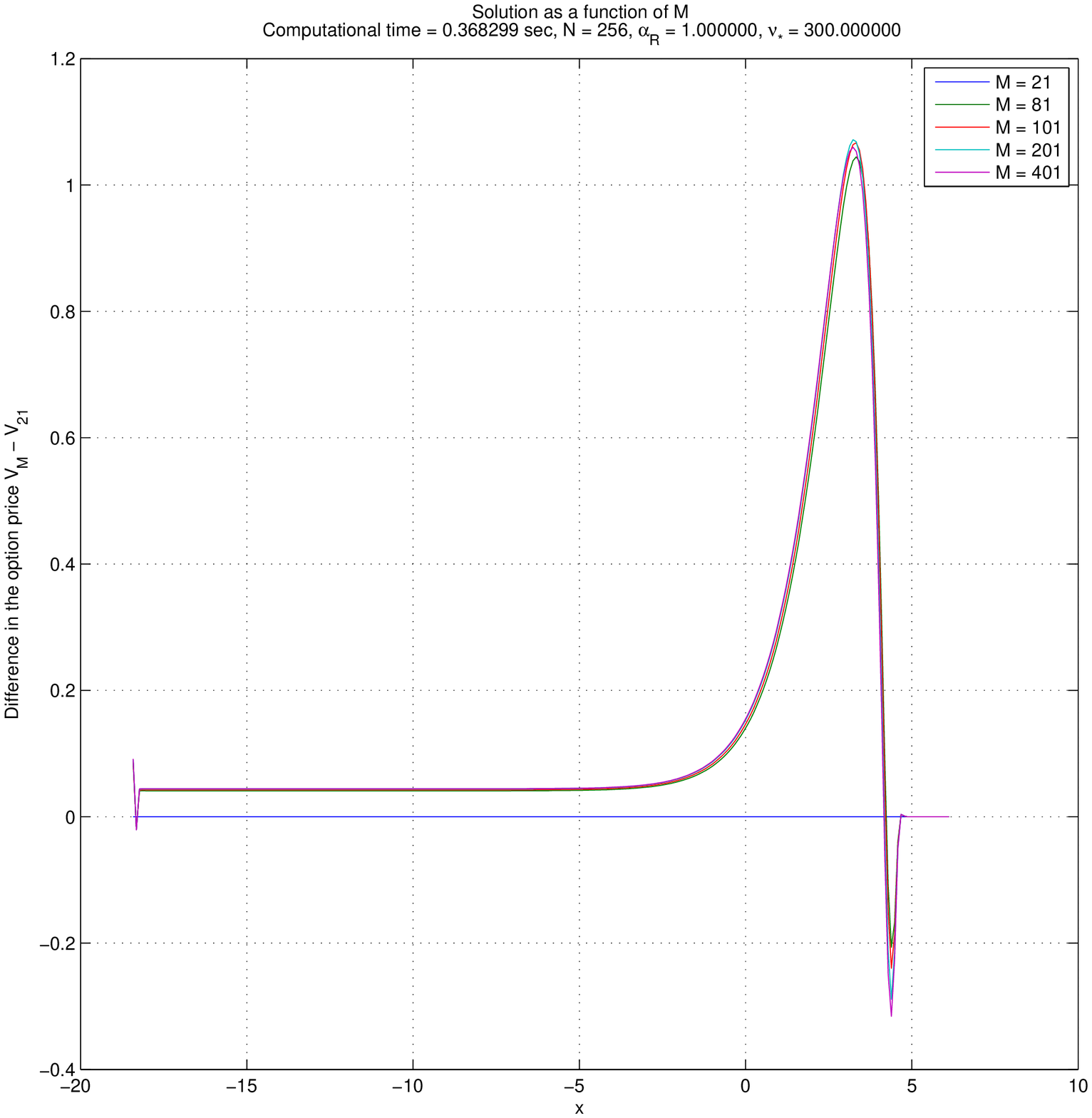}}
\caption{Difference in solutions of the Eq.~(\ref{finAlpha3}) obtained at various $M$ and that at $M = 21$ at $\nu_*=300$ and $\alpha_R = 1$.}
\label{varM}
\end{figure}

The computational time rawly increases by the factor $M/2$, i.e. for $M=30$ it is almost same as for the corresponding FFT. It is seen that an appropriate value of $\nu_*$ should be more than 300.

In the second experiment we fixed the value $\nu_* = 300$ and varied $M$ to see at which $M$ one could expect to get convergency. These results are presented in Fig~\ref{varM}. As it is seen $M=80$ seems to be sufficient to obtain the convergency. The computational time in the case $M=81$ is 1.4 sec which if compared with that given in the Fig~\ref{Fig0} is 3.6 times more than that for the FFT. Thus, in this case our algorithm is almost 4 times slower than the FFT. As it was already mentioned this could be compensated a) by the second order of accuracy in space and time, and b) no need for re-interpolation of the FFT results to the FD grid. One more advantage is that we don't need to treat the point $y=0$ in a special way as it was done, say in \cite{ContVolchkova2003}.

Note, that as we use $M$ steps in the splitting scheme, the error in time becomes $O(M\theta^2)$ that could kill the second order of approximation. Therefore, for instance, in the Eq.~(\ref{finAlpha1}) it is better to use a third order approximation in time (see the Eq.~(\ref{Pade21})). Accordingly the second equation in the  Eq.~(\ref{finAlpha3}) will become
\begin{align}\label{3rdAppr}
\left[1 - \dfrac{2p_i\theta}{3} + \dfrac{p_i^2 \theta^2}{6}\right] C_{i*}^{k+1}(x) &= \left[1 + \dfrac{p_i \theta}{3}  \right] C^k_{i-1}(x) \\
p_i &= m_i \left(\log \dfrac{\nu_{i,R}-1}{\nu_{i,R}}\right)\fp{}{x}, \quad i=0,...,M \nn
\end{align}

To preserve the third order of approximation in time the third equation in the Eq.~(\ref{finAlpha3}) should now be solved at $m=0,1,2,3$ and then cubic interpolation to the actual value $m_i$ will give the final solution. This scheme increases the total computational time by about 10\%, however the accuracy in time increases to $O(M \theta^3)$.

\section{Conclusion}
From the numerical point of view the proposed approach has an advantage as compared with the methods mentioned in the Introduction. Indeed, first we managed to reduce the original evolutionary integral equation to a pure differential equation. Second, this equation could be formally solved analytically. To compute the operator exponent we applied a Pad\'e approximation technique. This eventually allowed us to derive finite difference equations which approximate the original solution with the necessary order. This equations could be solved at the same grid as the diffusion part of the original PIDE thus eliminating problems inherent to the FFT methods. In addition, despite the original integral term is non-local, the rhs matrix $\mathcal{D}$ of the system of linear equations obtained by applying our approach is a band matrix in case of integer $\alpha_{R,L}$, i.e. it corresponds to a local approximation of the option price. Also we demonstrated that at $\alpha < 0$ the complexity of our algorithm is much lower than that of the FFT while the accuracy is much better.

The complexity of the solution at $\alpha =1$ is higher than that of the FFT. This in part is compensated by few factors: our algorithm provides the second
order approximation in both space and time, and it does not require to re-interpolate the FFT results to the FD grid which was previously used to find solution for the diffusion part of the original PIDE.

Using this technique the solution at $2 > \alpha > 1$ could be obtained by using extrapolation given the solution at $\alpha = 1,0,-1$.

It is interesting to know what are real values of $\alpha$. In \cite{Bu2007} the author used to calibrate the CGMY model to S\&P 500 historical call option prices. The market prices were chosen from June 2007 to December 2008. The strike is from 1300 to 2000 with the increment of 25 from 1300 to 1700 and the
increment 100 from 1700 to 2000. The index closed price is 1536.34. The found CGMY parameters were CGMY $C=0.0156, G=0.0767, M=7.5500, Y=1.2996$, i.e. $\alpha = 1.3$. In \cite{CarrGemanMadanYor2005} the option prices of S\&P 500 were also calibrated using CGMY model which gave the values of $\alpha$ in the range (-0.39,-0.42).



\newpage
\addcontentsline{toc}{section}{References}
\newcommand{\noopsort}[1]{} \newcommand{\printfirst}[2]{#1}
  \newcommand{\singleletter}[1]{#1} \newcommand{\switchargs}[2]{#2#1}

\end{document}